\newcommand \q {\mathbf{q}}
\renewcommand \b {\mathbf{b}}
\newcommand \x {\mathbf{x}}
\newcommand \y {\mathbf{y}}
\newcommand \z {\mathbf{z}}
\renewcommand \v {\mathbf{v}}
\newcommand \B {\mathbf{B}}
\newcommand \C {\mathbf{C}}
\renewcommand \o {\mathbf{o}}
\renewcommand \t {\mathbf{t}}
\newcommand \w {\mathbf{w}}
\newcommand \one {\mathbf{1}}
\renewcommand \c {\mathbf{c}}
\newcommand \D {\mathbf{D}}
\newcommand \SW {\mathrm{SW}}
\newcommand \EW {\mathrm{EW}}
\DeclareMathOperator*{\E}{\mathbb{E}}
\theoremstyle{definition}
\newtheorem{theorem}{Theorem}
\newtheorem{lemma}[theorem]{Lemma}
\newtheorem{proposition}[theorem]{Proposition}
\newtheorem{definition}[theorem]{Definition}
\theoremstyle{remark}
\newtheorem{remark}[theorem]{Remark}
\title {On the Efficiency of the Proportional Allocation Mechanism for
  Divisible Resources}
\author{George Christodoulou\thanks{University of Liverpool,
    UK. Email:\texttt{\{gchristo,salkmini,bo.tang\}@liverpool.ac.uk} }
		\thanks{This author was supported by EPSRC grants EP/M008118/1 and EP/K01000X/1.}
		\and Alkmini Sgouritsa\footnotemark[1]\and Bo Tang\footnotemark[1]}
\date{}
\begin{document}

\pagestyle{plain}
\maketitle

\begin{abstract}
  We study the efficiency of the \emph{proportional allocation
    mechanism}, that is widely used to allocate divisible 
  resources. Each agent submits a bid for each divisible resource and
  receives a fraction proportional to her bids. We quantify the
  inefficiency of Nash equilibria by studying the Price of Anarchy
  (PoA) of the induced game under complete and incomplete
  information. When agents' valuations are concave, we show that the
  Bayesian Nash equilibria can be arbitrarily inefficient, in contrast
  to the well-known $4/3$ bound for pure
  equilibria~\cite{johari_efficiency_2004}. Next, we upper bound the PoA over
  Bayesian equilibria by $2$ when agents' valuations are subadditive,
  generalizing and strengthening previous bounds on lattice submodular
  valuations. Furthermore, we show that this bound is tight and cannot
  be improved by any {\em simple} or {\em scale-free} mechanism. Then we switch to settings with
  budget constraints, and we show an improved upper bound on the PoA over
  coarse-correlated equilibria. Finally, we prove that the PoA is
  {\em exactly} $2$ for pure equilibria in the polyhedral environment.
\end{abstract}


\section{Introduction}
Allocating network resources, like bandwidth, among agents is a canonical
problem in the network optimization literature. A traditional model
for this problem was proposed by Kelly~\cite{kelly_charging_1997},
where allocating these infinitely divisible resources is treated as a
market with prices. More precisely, agents in the system submit bids
on resources to express their willingness to pay. After soliciting the
bids, the system manager prices each resource with an amount equal to
the sum of bids on it. Then the agents buy portions of resources
proportional to their bids by paying the corresponding prices. This
mechanism is known as the \emph{proportional allocation mechanism} or
Kelly's mechanism in the literature.

The proportional allocation mechanism is widely used in network
pricing and has been implemented for allocating computing resources in
several distributed systems \cite{CC00}. In practice, each agent has
different interests for different subsets and fractions of the
resources. This can be expressed via a {\em valuation} function of the
resource allocation vector, that is typically private knowledge to
each agent. Thus, agents may bid strategically to maximize their own
utilities, i.e., the difference between their valuations and
payments. Johari and Tsitsiklis~\cite{johari_efficiency_2004} observed
that this strategic bidding in the proportional allocation mechanism
leads to inefficient allocations, that do not maximize social
welfare. On the other hand, they showed that this efficiency loss is
bounded when agents' valuations are concave. More specifically, they
proved that the proportional allocation game admits a {\em unique
  pure} equilibrium with Price of Anarchy (PoA)~\cite{KP99} at most
$4/3$.

An essential assumption used by Johari and Tsitsiklis~\cite{johari_efficiency_2004} is that agents
have complete information of each other's valuations. However, in many
realistic scenarios, the agents are only partially informed. A
standard way to model incomplete information is by using the Bayesian
framework, where the agents' valuations are drawn independently from
some publicly known distribution, that in a sense, represents the
agents' beliefs. A natural question is whether the efficiency loss is
still bounded in the Bayesian setting. We give a negative answer to
this question by showing that the PoA over Bayesian equilibria is at
least $\sqrt{m}/2$, where $m$ is the number of resources. This result
complements the current study by Caragiannis and
Voudouris~\cite{CV14}, where the PoA of single-resource proportional
allocation games is shown to be at most $2$ in the Bayesian setting.

Non-concave valuation functions were studied by Syrgkanis and
Tardos~\cite{ST13} for both complete and incomplete information games. They
showed that, when agents' valuations are
lattice-submodular, the PoA for coarse correlated
and Bayesian Nash equilibria is at most $3.73$, by applying their general 
smoothness framework. In this paper, we study subadditive
valuations~\cite{evans1984test} that is a superclass of lattice
submodular functions. We prove that the PoA over Bayesian Nash equilibria
is at most $2$. Moreover, we show optimality of the proportional
allocation mechanism, by showing that this bound is tight and
cannot be improved by any \emph{simple} mechanism
, as defined in the recent framework of
Roughgarden~\cite{Rou14}\footnote{In a simple mechanism, the agents' action space should be 
at most sub-doubly-exponential in
 $m$.}, or any {\em scale-free} mechanism\footnote{The 
basic property of a scale-free mechanism is that, if every bid is scaled 
by the same constant, the outcome remains unchanged (we refer the reader to Section 
\ref{sec:scalefree} for the complete definition).}. 

Next, we switch to the setting where agents are constrained
by budgets, that represent the maximum payment they can afford. We
prove that the PoA of the proportional allocation mechanism is at most
$1+\phi\approx 2.618$, where $\phi$ is the golden ratio. The previously
best known bound was $2.78$ and for a single resource due
to~\cite{CV14}. 
Finally, we consider the polyhedral environment that was previously
studied by Nguyen and Tardos in \cite{nguyen_approximately_2007},
where they proved that pure equilibria are at least $75\%$ efficient
with concave valuations. We prove that the PoA is exactly $2$ for agents
with subadditive valuations.

{\bf Related Work.}
The efficiency of the proportional allocation mechanism has been extensively
studied in the literature of network resource allocation. Besides the
work mentioned above, Johari and Tsitsiklis \cite{JT09} studied a more
general class of scale-free mechanisms and proved that the proportional 
allocation mechanism achieves the best PoA in this class.
Zhang~\cite{Zhang05} and Feldman et al.~\cite{Feldman2005} studied the
efficiency and fairness of the proportional allocation mechanism, when
agents aim at maximizing non quasi-linear utilities subject to budget constraints. 
Correa, Schulz and Stier-Moses \cite{CSS13}
showed a relationship in the efficiency loss between proportional
allocation mechanism and non-atomic selfish routing for not necessarily
concave valuation functions.

There is a line of research studying the PoA of simple
auctions for selling indivisible goods~(see
\cite{BR11,CKS08,HKMN11,ST13}
). Recently,
Feldman et al. \cite{feldman_simultaneous_2012} showed tighter upper
bounds for simultaneous first and second price auctions when the
agents have subadditive valuations. Christodoulou et al.~\cite{CKST13}
showed matching lower bounds for simultaneous first price auctions, and
Roughgarden~\cite{Rou14} proved general lower bounds for the PoA of
all simple auctions, by using the corresponding computational or
communication lower bounds of the underlying allocation problem.


\section{Preliminaries}
\label{sec:preliminary}

There are $n$ {\em agents} who compete for $m$ {\em divisible
  resources} with {\em unit} supply. Every agent $i\in [n]$ has a
valuation function $v_i: [0,1]^m \rightarrow \mathbb R_+$, where $[n]$
denotes the set $\{1,2,\dots,n\}$. The valuations are normalized as
$v_i(\mathbf{0})=0$, and monotonically non-decreasing, that is,
for every $\x,\x' \in [0,1]^m$, where $\x=(x_j)_j, \x'=(x'_j)_j$ and
$\forall j\in[m] \; x_j\le x'_j $, we have $v_i(\x)\leq v_i(\x')$. Let $\x+\y$
be the componentwise sum of two vectors $\x$ and $\y$.

\begin{definition}
  A function $v: [0,1]^m\rightarrow \mathbb{R}_{\ge 0}$ is
  subadditive if, for all $\x,\y\in [0,1]^m$, such
  that $\x+\y \in[0,1]^m$, it is $v(\x + \y)\leq v(\x)+v(\y)$.
\end{definition}
\begin{remark}
Lattice submodular functions used in \cite{ST13} are subadditive (see
Section \ref{sec:Proportional}).
In the case of a single variable
(single resource), any concave function is subadditive; more
precisely, concave functions are equivalent to lattice submodular
functions in this case.  However, concave functions of many variables
may not be subadditive~\cite{Ros50}.
\end{remark}

In the \emph{Bayesian} setting, the valuation of each agent $i$ is
drawn from a set of possible valuations $V_i$, according to some known
probability distribution $D_i$. We assume that $D_i$'s are independent, but not
necessarily identical over the agents. 

A mechanism can be represented by a tuple $(\x,\q)$, where $\x$
specifies the allocation of resources and $\q$ specifies the agents'
payments. In the mechanism, every agent $i$ submits a non-negative bid
$b_{ij}$ for each resource $j$.  
The proportional allocation mechanism determines the allocation
$x_i=(x_{ij})_j$ 
and payment $q_i$, for each agent $i$, as follows:
$x_{ij}=\frac{b_{ij}}{\sum_{k\in[n]}b_{kj}}$,
$q_i=\sum_{j\in [m]}b_{ij}.$ When all agents bid $0$, the allocation
can be defined arbitrarily, but consistently.

{\bf Nash Equilibrium.} 
We denote by $\b=(b_1,\ldots,b_n)$ the strategy profile of all agents, 
where $b_i=(b_{i1},\ldots,b_{im})$ denotes the pure bids of agent $i$ 
for the $m$ resources. 
By $\b_{-i}=(b_1,\ldots,b_{i-1},b_{i+1},\ldots,b_n)$ we denote the
strategies of all agents except for $i.$ Any \emph{mixed, correlated, coarse
  correlated or Bayesian strategy $B_i$} of agent $i$ is a probability
distribution over $b_i$. For any strategy profile $\b$, $\x(\b)$
denotes the allocation and $\q(\b)$ the payments under the strategy
profile $\b$. The \emph{utility $u_i$} of agent $i$ is defined as the
difference between her valuation for the received allocation and her
payment: $u_i(\x(\b),\q(\b))=u_i(\b)=v_i(x_i(\b))-q_i(\b)$.

\begin{definition}
A bidding profile $\B$ forms the following equilibrium if for every agent $i$ and all bids $b'_i$: 

\noindent Pure Nash equilibrium:
\emph{$\B=\b$, $u_i(\b)\ge u_i(b'_i, \b_{-i})$}.

\noindent Mixed Nash equilibrium:
\emph{$\B=\times_i B_i$, $\E_{\b\sim\B}[u_i(\b)]\ge \E_{\b\sim
  \B}[u_i(b'_i,\b_{-i})]$}.

\noindent Correlated equilibrium:
\emph{$\B=\left(B_i\right)_i$, $\E_{\b\sim\B}[u_i(\b)|b_i]\ge \E_{\b\sim\B}[u_i(b'_i,\b_{-i})|b_i]$}.

\noindent Coarse correlated equilibrium:
\emph{$\B=\left(B_i\right)_i$, $\E_{\b\sim\B}[u_i(\b)]\ge \E_{\b\sim \B}[u_i(b'_i,\b_{-i})]$}.

\noindent Bayesian Nash equilibrium:
\emph{$\B(\v)=\times_iB_i(v_i)$, $\E_{\v_{-i},\b}[u_i(\b)]\ge \E_{\v_{-i},\b}[u_i(b'_i, \b_{-i})]$}.

\end{definition}
The first four classes of equilibria are in increasing order of
inclusion. Moreover, any mixed Nash equilibrium is also a Bayesian
Nash equilibrium.

{\bf Price of Anarchy (PoA).}
Our global objective is to maximize the sum of the agents' valuations
for their received allocations, i.e., to maximize the \emph{social
  welfare} $\SW(\x)=\sum_{i\in [n]} v_i(x_i).$ Given the valuations,
$\v$, of all agents, there exists an optimal allocation
$\o^{\v}=\o=(o_1,\ldots,o_n)$, such that $\SW(\o)=\max_{\x} \SW(\x)$.
By $o_i = (o_{i1}, \ldots, o_{im})$ we denote the optimal allocation
to agent $i$. For simplicity, we use $\SW(\b)$ and $v_i(\b)$ instead of
$\SW(\x(\b))$ and $v_i(x_i(\b))$, whenever the allocation rule $\x$ is
clear from the context. We also use shorter notation for expectations,
e.g. we use $\E_{\v}$ instead of $\E_{\v\sim \mathbf{D}}$, 
$\E[u_i(\b)]$ instead of $\E_{\b\sim \B}[u_i(\b)]$ and
$u(\B)$ for $\E_{\b \sim \B}[u(\b)]$ whenever $\D$ and $\B$ are clear
from the context.

\begin{definition} Let $\mathcal{I}([n],[m],\v)$ be the set of all
instances, i.e., $\mathcal{I}([n],[m],\v)$ includes
the instances for every set of agents and resources and any possible
valuations that the agents might have for the resources. We define the
pure, mixed, correlated, coarse correlated and Bayesian Price of
Anarchy, PoA, as
$$\text{PoA} = \max_{I \in \mathcal{I}} \max_{\B \in \mathcal{E}(I)}
\frac{\E_{\substack{\v}}[\SW(\o)]}{\E_{\substack{\v,
      \b\sim\B}}[\SW(\b)]}, $$ where $\mathcal{E}(I)$ is the set of
pure Nash, mixed Nash, correlated, coarse correlated or Bayesian Nash
equilibria for the specific instance $I \in \mathcal{I}$, 
respectively\footnote{The expectation over $\v$ is only
needed for the definition of Bayesian PoA.}. 
\end{definition}

{\bf Budget Constraints.} 
We also consider the setting where agents are budget-constrained.
That is, the payment of each agent $i$ cannot be higher than $c_i$,
where $c_i$ is a non-negative value denoting agent $i$'s budget.
Following \cite{CV14,ST13}, we use {\em Effective Welfare} as the
benchmark: $\EW(\x) = \sum_i \min\{v_i(x_i),c_i\}$. In addition, for
any {\em randomized} allocation $\x$, the expected effective welfare
is defined as:
$\E_{\x}[\EW(\x)] = \sum_i \min\{\E_{\x}[v_i(x_i)],c_i\}.$


\section{Concave Valuations}
\label{sec:concave}

In this section, we show that for concave valuations on multiple
resources, Bayesian equilibria can be arbitrarily inefficient. More
precisely, we prove that the Bayesian PoA is $\Omega(\sqrt{m})$ in
contrast to the constant bound for pure
equilibria~\cite{johari_efficiency_2004}.  Therefore, there is a big
gap between complete and incomplete information settings. We state our
main theorem in this section as follows.

\begin{theorem}
  When valuations are concave, the PoA of the proportional allocation
  mechanism for Bayesian equilibria is at least
  $\frac{\sqrt{m}}2$.
\end{theorem}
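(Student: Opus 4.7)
The plan is to exhibit a two-agent instance with concave valuations that admits a Bayesian Nash equilibrium whose expected social welfare is only $2m/(1+\sqrt{m})$, against an optimum of $m$; the resulting ratio $(1+\sqrt{m})/2$ dominates $\sqrt{m}/2$. The core idea is to pit a deterministic Leontief ``big'' agent, who needs a positive fraction of every resource, against a Bayesian ``small'' agent whose type determines a single random target. The Leontief agent must hedge across all resources without knowing which one is contested, while the small agent bids only on her target, and the equilibrium splits each contested resource in proportions that waste most of the optimum.

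Concretely, take two agents and $m$ resources. Agent~1 has the deterministic valuation $v_1(\x) = m\cdot \min_{j\in[m]} x_{1j}$, which is concave and normalized. Agent~2 draws a type $J$ uniformly from $[m]$, and given $J$ her valuation is $v_2(\x) = \sqrt{m}\cdot x_{2,J}$; this is also concave and normalized. I would propose the following profile: agent~1 bids $b = \sqrt{m}/(1+\sqrt{m})^2$ on every resource, and agent~2 of type $J$ bids $c = m/(1+\sqrt{m})^2$ on resource $J$ and zero elsewhere. Agent~2's best response is immediate, since only $x_{2,J}$ enters her value; the FOC on resource $J$, namely $\sqrt{m}\,b/(c+b)^2 = 1$, pins down $c$. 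For agent~1 the crucial reduction is that, because agent~2 bids positively on exactly one random resource, under any positive bid vector $(b_1,\dots,b_m)$ we have $\min_j x_{1j} = b_J/(b_J+c)$. Averaging over the uniform $J$ therefore makes agent~1's expected utility separable,
\[
\sum_{j=1}^m \left[\frac{b_j}{b_j+c} - b_j\right],
\]
and each per-resource FOC $(b_j+c)^2 = c$ forces the symmetric bid $b_j = b$. Combining the two FOCs yields the stated $b$ and $c$; one checks that both agents obtain strictly positive utility ($m/(1+\sqrt{m})^2$ for agent~1 and a larger positive quantity for each type of agent~2), so bidding zero is dominated and the profile is indeed a Bayesian Nash equilibrium.

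Using $b+c = \sqrt{m}/(1+\sqrt{m})$ one computes $v_1 = m\cdot b/(b+c) = m/(1+\sqrt{m})$ and $\E v_2 = \sqrt{m}\cdot c/(b+c) = m/(1+\sqrt{m})$, so the expected equilibrium welfare is $2m/(1+\sqrt{m})$. The optimal allocation gives every resource fully to agent~1 and attains welfare exactly $m$: any fraction redirected to agent~2 reduces agent~1's Leontief min by the same amount and costs her $m$ times that amount, while yielding agent~2 at most $\sqrt{m}$ times that amount, so for $m > \sqrt{m}$ the trade is never favorable. Hence $\mathrm{PoA} \ge m\cdot(1+\sqrt{m})/(2m) = (1+\sqrt{m})/2 \ge \sqrt{m}/2$, as required. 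The main technical obstacle is justifying that agent~1's best response is symmetric across resources; this rests entirely on the Bayesian linearization in which integrating out the uniform type $J$ collapses the Leontief $\min$ to a single coordinate and makes her utility additively separable. Without that observation the multi-resource Leontief optimization would be much more delicate, but once the separability is in hand the PoA bound reduces to per-resource single-variable concave maximization.
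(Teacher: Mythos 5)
Your proposal is correct and is essentially the paper's own proof: the instance (Leontief agent versus a single-resource Bayesian agent with slope $\sqrt{m}$ times smaller), the claimed equilibrium, and the verification via separability of agent~1's expected utility after averaging over the uniform type are all identical to the paper's construction up to a global rescaling of the valuations (and hence the bids) by a factor of $m$, yielding the same ratio $(1+\sqrt{m})/2 \ge \sqrt{m}/2$.
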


\begin{proof}
  We consider an instance with $m$ resources and $2$ agents with the
  following concave valuations. $v_1(\x)=\min_j\{x_j\}$
  and $v_2(\x)$ is drawn from a distribution $D_2$, such that some resource $j\in [m]$ 
	is chosen uniformly at random 
  and then $v_2(\x)=x_j/\sqrt{m}$. Let
  $\delta=1/(\sqrt{m}+1)^2$.  We claim that
  $\b(\v) = (b_1,b_2(v_2))$ is a pure Bayesian Nash equilibrium, where
  $\forall j\in[m]$, $b_{1j}=\sqrt{\delta / m} - \delta$ and, if $j\in [m]$ is 
	the resource chosen by $D_2$, $b_{2j}(v_2)=\delta$ and for all $j'\neq j$ $b_{2j'}=0$.

  Under this bidding profile, agent $1$ bids the same value for all
  resources, and agent $2$ only bids positive value for a single
  resource associated with her valuation. Suppose that agent $2$ has 
	positive valuation for resource $j$, i.e.,  
  $v_2(\x)=x_j/\sqrt m$. Then the rest $m-1$ resources are allocated
  to agent $1$ and agents are competing for resource $j$. Bidder
  $2$ has no reason to bid positively for any other resource. If
  she bids any value $b'_{2j}$ for resource $j$, her utility would be
  $u_2(\b_1,b'_{2j}) =
  \frac{1}{\sqrt{m}}\frac{b'_{2j}}{b_{1j}+b'_{2j}} - b'_{2j}$,
  which is maximized for
  $b'_{2j} = \sqrt{\frac{b_{1j}}{\sqrt{m}}}-b_{1j}$. For $b_{1j}=\sqrt{\delta / m} - \delta$, 
	the utility of agent $2$ is maximized for
  $b'_{2j} = 1/(\sqrt{m}+1)^2=\delta$ by
  simple calculations.

Since
  $v_1(\x)$ equals the minimum of $\x$'s components, agent $1$'s valuation
  is completely determined by the allocation of resource $j$. So the
  expected utility of agent $1$ under $\b$ is
  $\E_{v_2}[u_1(\b)] = \frac{\sqrt{\delta / m} - \delta}{\sqrt{\delta / m} - \delta+\delta}-m(\sqrt{\delta / m} - \delta)=
  (1-\sqrt{m\delta})^2 =
  \frac{1}{\left( \sqrt{m}+1\right)^2}=\delta$.
  Suppose now that agent $1$ deviates to $b'_1 = (b'_{11}, \ldots , b'_{1m})$. 
\begin{align*}
   \E_{v_2}[u_1(b'_1,b_2)] &= \frac 1m \sum_j \frac{b'_{1j}}{b'_{1j}+\delta} - \sum_j b'_{1j}
		= \frac 1m \sum_j \left(\frac{b'_{1j}}{b'_{1j}+\delta} - m \cdot b'_{1j}\right)\\
		&\leq \frac 1m \sum_j \left(\frac{\sqrt{\delta/m} - \delta}{\sqrt{\delta/m}} - m \cdot (\sqrt{\delta/m} - \delta)\right)\\
		&= \frac 1m \sum_j \left(1-2 \sqrt{m\cdot \delta} + m\cdot \delta\right)
		= \frac 1m \sum_j \left(1 - \sqrt{m\cdot \delta} \right)^2\\
		& = \frac 1m \sum_j \left(\frac{1}{\sqrt{m}+1} \right)^2 =\delta = \E_{v_2}[u_1(\b)].
  \end{align*}
	The inequality comes from the fact that $\frac{b'_{1j}}{b'_{1j}+\delta} - m \cdot b'_{1j}$ 
	is maximized for $b'_{1j} = \sqrt{\delta/m} - \delta$. So we
        conclude that
  $\b$ is a Bayesian equilibrium.

  Finally we compute the PoA. The expected social welfare under $\b$
  is $\E_{v_2}[\SW(\b)]= \frac {\sqrt{\delta / m} - \delta}{\sqrt{\delta / m} - \delta+\delta} + \frac{1}{\sqrt{m}}\frac
  {\delta}{\sqrt{\delta / m} - \delta+\delta} = 1-\sqrt{m\delta}+\sqrt{\delta} = \frac
  2{\sqrt{m}+1} < \frac{2}{\sqrt{m}}.$
  But the optimal social welfare is $1$ by allocating to agent $1$ all resources.
  So, PoA $\ge\frac{\sqrt{m}}2.$
\end{proof}


\section{Subadditive Valuations}
\label{sec:Proportional}

In this section, we focus on agents with subadditive valuations.  
We prove that the proportional allocation mechanism is at least
$50\%$ efficient for coarse correlated equilibria and Bayesian
Nash equilibria, i.e., $\mathrm{PoA} \le 2$. We further show that this bound is
tight and cannot be improved by any simple or scale-free mechanism. 
Before proving our PoA bounds, we show that the class of subadditive functions 
is a superclass of lattice submodular functions.

\begin{proposition}
  Any lattice submodular function $v$ defined on $[0,1]^m$ is
  subadditive.
\end{proposition}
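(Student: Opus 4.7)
The plan is to establish subadditivity of $v$ via a telescoping argument that rests on the diminishing returns (DR) characterization of lattice submodularity: for any $\mathbf{a}\le \mathbf{b}$ componentwise, any $\delta\ge 0$, and any coordinate $j$,
\[v(\mathbf{a}+\delta\mathbf{e}_j)-v(\mathbf{a})\;\ge\; v(\mathbf{b}+\delta\mathbf{e}_j)-v(\mathbf{b}),\]
where $\mathbf{e}_j$ is the $j$-th standard basis vector. This is the form in which lattice submodularity is typically deployed in the setting of \cite{ST13}, and it will be the workhorse of the argument.

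Given $\x,\y$ with $\x+\y\in[0,1]^m$, my plan is to telescope $v(\x+\y)-v(\y)$ along the coordinates of $\x$. Writing $\x_{\le j}=(x_1,\dots,x_j,0,\dots,0)$, so that $\x_{\le 0}=\mathbf{0}$ and $\x_{\le m}=\x$, I would expand
\[v(\x+\y)-v(\y)=\sum_{j=1}^{m}\bigl[v(\y+\x_{\le j})-v(\y+\x_{\le j-1})\bigr].\]
The $j$-th summand is the marginal value of adding $x_j$ in coordinate $j$ on top of the large baseline $\y+\x_{\le j-1}$. Since $\x_{\le j-1}\le \y+\x_{\le j-1}$ componentwise (by nonnegativity of $\y$), DR upper-bounds this term by the corresponding marginal $v(\x_{\le j})-v(\x_{\le j-1})$ taken from the smaller baseline $\x_{\le j-1}$. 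Summing these bounds telescopes to $v(\x)-v(\mathbf{0})=v(\x)$, delivering $v(\x+\y)\le v(\x)+v(\y)$, which is exactly the inequality to be proved.

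The only real obstacle I anticipate is definitional rather than combinatorial. The bare $\vee,\wedge$ submodular inequality $v(\x\vee\y)+v(\x\wedge\y)\le v(\x)+v(\y)$ on the continuous cube is strictly weaker than DR and does not on its own imply subadditivity, as witnessed by the separable function $v(x_1,x_2)=x_1^2+x_2^2$ (its cross partial vanishes, so it is $\vee,\wedge$-submodular, yet $v(1,0)=1>\tfrac12=v(\tfrac12,0)+v(\tfrac12,0)$). Hence before running the telescoping step one must fix the precise definition of lattice submodular function inherited from \cite{ST13} and verify that it entails the coordinatewise DR inequality used above; once that point is settled, the remainder of the argument is entirely mechanical.
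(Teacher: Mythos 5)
Your telescoping argument is essentially the paper's own proof: the paper likewise decomposes $v(\x+\y)-v(\y)$ coordinate by coordinate (via interpolating vectors $\z^k$ and $\w^k$) and bounds each marginal by the corresponding marginal taken from the smaller baseline, which telescopes to $v(\x)-v(\mathbf{0})$. The definitional point you rightly flag is settled in the paper by citing the fact from \cite{ST13} that lattice submodularity forces $\frac{\partial^2 v}{(\partial x_j)^2}\le 0$ and $\frac{\partial^2 v}{\partial x_j\partial x_{j'}}\le 0$, so each $\frac{\partial v}{\partial x_j}$ is coordinatewise non-increasing --- precisely your diminishing-returns inequality in integrated form.
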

\begin{proof}
  It has been shown in \cite{ST13} that for any lattice submodular
  function $v(x)$, $\frac{\partial^2 v(x)}{(\partial x_j)^2}\le 0$ and
  $\frac{\partial^2 v(x)}{\partial x_j\partial x_{j'}}\le 0$. So the
  function $\frac{\partial v}{\partial x_j}(x)$ is non-increasing
  monotone for each coordinate $x_{j'}$. It suffices to prove that 
  for any $\x, \y\in [0,1]^m$, $v(\x+\y)-v(\y)\le
  v(\x)-v(\mathbf{0})$. Let $\z^k$ be the vector that $z^k_j=y_j$ if
  $j\le k$ and $x_j+y_j$ otherwise. Note that $\z^0=\x+\y$ and
  $\z^m=\y$. Similarly, we define $\w^k$ to be the vector that
  $\w^k_j=0$ if $j\le k$ and $x_j$ otherwise. It is easy to see that
  $\z^k\ge \w^k$ for all $k\in [m]$. So we have,
  \begin{align*}
    &v(\x+\y)-v(\y)=\sum_{j\in [m]}v(\z^{j-1})-v(\z^j)
    =\sum_{j\in[m]}\int_{y_j}^{x_j+y_j}\frac{\partial v}{\partial x_j}(t_j;\z^j_{-j})dt_j\\
    \le&\sum_{j\in[m]}\int_{y_j}^{x_j+y_j}\frac{\partial v}{\partial x_j}(t_j-y_j;\z^j_{-j})dt_j
    \le\sum_{j\in[m]}\int_{0}^{x_j}\frac{\partial v}{\partial
      x_j}(s_j;\w^j_{-j})ds_j= v(\x)-v(\mathbf{0})
  \end{align*}
 The second equality is due to the definition of partial derivative
 and the inequalities is due to the monotonicity of $\frac{\partial v}{\partial x_j}(x)$.
\end{proof}

\subsection{Upper bound}
\label{sec:subadditiveUB} 
A common approach to prove PoA upper bounds is to
find a deviation with proper utility bounds and then use the definition
of Nash equilibrium to bound agents' utilities at equilibrium. The
bidding strategy described in the following lemma is for this purpose.

\begin{lemma}
  \label{lem:proportion}
  Let $\v$ be any subadditive valuation profile and $\B$ be some
  randomized bidding profile. For any agent $i$, there exists a
  randomized bidding strategy $a_i(\v,\B_{-i})$ such that:
  $$\sum_iu_i(a_i(\v,\B_{-i}), \B_{-i})\ge \frac 12
  \sum_iv_i(o_i^{\v})-\sum_i\sum_j\E_{\b \sim \B}[b_{ij}].$$
\end{lemma}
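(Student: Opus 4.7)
The plan is to use the standard ``bid proportionally to the optimal share, scaled by the competitors' mass'' deviation that is classical for Kelly's mechanism. For each realization $\b_{-i}$ of $\B_{-i}$, I would define the deviation coordinate-wise by $a_{ij}(\v, \b_{-i}) := o_{ij}^{\v} \cdot B_{-ij}$, where $B_{-ij} := \sum_{k\neq i} b_{kj}$. A direct calculation of the proportional allocation rule then gives that agent $i$ receives exactly $x_{ij} = \frac{o_{ij}^{\v}}{1 + o_{ij}^{\v}} \geq \tfrac{1}{2} o_{ij}^{\v}$ on resource $j$, using $o_{ij}^{\v}\le 1$. The degenerate case $B_{-ij}=0$ is handled by letting the bid be an infinitesimal $\varepsilon\to 0^+$, which hands agent $i$ the entire resource at vanishing cost.

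The next step, which is the heart of the argument, is to invoke subadditivity and monotonicity to conclude $v_i(x_i) \ge \tfrac{1}{2} v_i(o_i^{\v})$. The deviation was engineered precisely so that $o_{ij}^{\v} - x_{ij} = \frac{(o_{ij}^{\v})^{2}}{1+o_{ij}^{\v}} \le x_{ij}$ componentwise; therefore
\begin{align*}
v_i(o_i^{\v}) \;=\; v_i\bigl(x_i + (o_i^{\v} - x_i)\bigr) \;\leq\; v_i(x_i) + v_i(o_i^{\v} - x_i) \;\leq\; 2\, v_i(x_i),
\end{align*}
where the first inequality is subadditivity (the hypothesis $x_i + (o_i^{\v}-x_i)\in[0,1]^m$ is just $o_i^{\v}\in[0,1]^m$, which holds) and the second is monotonicity applied to $o_i^{\v} - x_i \le x_i$. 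Since the payment under the deviation is $\sum_j a_{ij}= \sum_j o_{ij}^{\v} B_{-ij}$, for every realization $\b_{-i}$ we obtain
\begin{align*}
u_i\bigl(a_i(\v, \b_{-i}), \b_{-i}\bigr) \;\geq\; \tfrac{1}{2}\, v_i(o_i^{\v}) \;-\; \sum_j o_{ij}^{\v}\, B_{-ij}.
\end{align*}

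Finally I would sum over $i$, take expectation over $\b\sim\B$, and swap the order of summation in the payment term. The key inequality is
\begin{align*}
\sum_i \sum_j o_{ij}^{\v}\, \E[B_{-ij}] \;=\; \sum_j \sum_k \E[b_{kj}] \sum_{i\neq k} o_{ij}^{\v} \;\leq\; \sum_j \sum_k \E[b_{kj}],
\end{align*}
where the last step uses $\sum_i o_{ij}^{\v} \le 1$, i.e.\ feasibility of the optimum on each unit-supply resource. Combining this with the previous display yields the claimed bound. I expect the only mildly delicate part to be the subadditivity step, but it works cleanly because the engineered deviation guarantees $x_i \ge \tfrac12 o_i^{\v}$ componentwise; no concavity or lattice-submodularity of $v_i$ is needed beyond plain subadditivity.
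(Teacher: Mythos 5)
There is a genuine gap, and it is not in the subadditivity step you flagged as delicate but in the very first line: your deviation $a_{ij}(\v,\b_{-i}) := o_{ij}^{\v}\cdot B_{-ij}$ is defined \emph{per realization} of $\b_{-i}$. That is not a legitimate unilateral deviation against a randomized profile $\B_{-i}$. The lemma requires a strategy $a_i(\v,\B_{-i})$ that depends only on the \emph{distribution} $\B_{-i}$ (it may be randomized, but its randomness must be independent of the opponents' realized bids), because the lemma is then combined with the coarse correlated / Bayesian equilibrium condition $u_i(\B)\ge u_i(a_i,\B_{-i})$, which only quantifies over such deviations. Your argument instead bounds the utility of a clairvoyant best response that observes $\b_{-i}$ before bidding; this quantity can strictly exceed what any admissible deviation achieves, so the equilibrium inequality cannot be applied to it. For a \emph{pure} profile $\b_{-i}$ your per-realization computation is correct and is essentially the classical argument (and indeed it is what the paper does in the polyhedral-environment theorem, where only pure equilibria are considered), but the whole point of this lemma is to handle randomized $\B_{-i}$.

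The paper closes exactly this gap with a sampling trick: let $P_i$ be the distribution of the aggregate opponent bids $p_i=(\sum_{k\neq i}b_{kj})_j$, draw an \emph{independent} copy $b'_i\sim P_i$, and bid $a_{ij}=o_{ij}^{\v}b'_{ij}$. Since $b'_i$ is now independent of the realized $p_i$, one cannot say the allocation is pointwise at least $o_{ij}/(1+o_{ij})$; instead one symmetrizes (swaps the roles of $b'_i$ and $p_i$, which are i.i.d.) and uses subadditivity to get
\[
2\,\E\!\left[v_i\!\left(\left(\tfrac{o_{ij}b'_{ij}}{o_{ij}b'_{ij}+p_{ij}}\right)_j\right)\right]\ \ge\ \E\!\left[v_i\!\left(\left(\tfrac{o_{ij}b'_{ij}}{o_{ij}b'_{ij}+p_{ij}}+\tfrac{o_{ij}p_{ij}}{o_{ij}p_{ij}+b'_{ij}}\right)_j\right)\right]\ \ge\ v_i(o_i^{\v}),
\]
where the last step uses $o_{ij}\le 1$ componentwise. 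This is where the factor $2$ genuinely comes from in the randomized setting. Your final accounting of the payment term via $\sum_i o_{ij}^{\v}\le 1$ matches the paper and is fine; the missing idea is the independent resampling from $P_i$ together with the swap.
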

\begin{proof}
  Let $p_{ij}$ be the sum of the bids of all agents except $i$ on
  resource $j$, i.e., $p_{ij}=\sum_{k\neq i}b_{kj}$. Note that $p_{ij}$
  is a random variable that depends on $\b_{-i}\sim
  \B_{-i}$. Let $P_i$ be the propability distribution of $p_i=(p_{ij})_j$. 
	Inspired by \cite{feldman_simultaneous_2012}, we consider
  the bidding strategy $a_i(\v,\B_{-i}) = (o_{ij}^{\v}\cdot b'_{ij})_j$, 
	where $b'_i\sim P_i$. Then, $u_i(a_i(\v,\B_{-i}), \B_{-i})$ is 
	\begin{align*}
  &\E_{b'_i\sim P_i}\E_{p_i\sim
    P_i}\left[v_i\left(\left(\frac{o^{\v}_{ij}b'_{ij}}{o^{\v}_{ij}b'_{ij}+p_{ij}}\right)_j\right)-o^{\v}_i\cdot b'_i\right]\\
    \ge& \frac 12 \cdot \E_{p_i\sim P_i}\E_{b'_i\sim
      P_i}\left[v_i\left(\left(\frac{o^{\v}_{ij}b'_{ij}}{o^{\v}_{ij}b'_{ij}+p_{ij}}+\frac{o^{\v}_{ij}p_{ij}}{o^{\v}_{ij}p_{ij}+b'_{ij}}\right)_j\right)\right]-\E_{p_i\sim P_i}[o^{\v}_i\cdot p_i]\\
    \ge& \frac 12 \cdot \E_{p_i\sim P_i}\E_{b'_i\sim
      P_i}\left[v_i\left(\left(\frac{o^{\v}_{ij}(b'_{ij}+p_{ij})}{b'_{ij}+p_{ij}}\right)_j\right)\right]-\E_{p_i\sim P_i}[o^{\v}_i\cdot p_i]\\
    =&\frac 12 \cdot v_i(o^{\v}_i)-\sum_j \sum_{k\neq i}\E_{\b\sim \B}[o^{\v}_{ij}\cdot b_{kj}]
  \end{align*}
  The first inequality follows by swapping $p_{ij}$ and $b'_{ij}$ and
  using the subadditivity of $v_i$. The second inequality comes from
  the fact that $o^{\v}_{ij}\le 1$. The lemma follows by summing up
  over all agents and the fact that $\sum_{i\in [n]}o_{ij}^{\v}= 1$.
\end{proof}

\begin{theorem}
  \label{thm:pro_mixed}
  The coarse correlated PoA of 
  the proportional allocation mechanism
  with subadditive agents is at most $2$.
\end{theorem}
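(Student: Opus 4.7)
The plan is to apply Lemma \ref{lem:proportion} within the standard smoothness-style template: use the deviation $a_i(\v,\B_{-i})$ provided by the lemma as a unilateral deviation in the coarse correlated equilibrium condition, sum over agents, and observe that the bid terms cancel against the sum of payments, leaving the desired welfare bound.

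First, I would verify that $a_i(\v,\B_{-i})$ is a legitimate deviation for a coarse correlated equilibrium. Inspecting its construction in the proof of Lemma \ref{lem:proportion}, the bid $(o_{ij}^{\v}\cdot b'_{ij})_j$ is drawn by sampling $b'_i \sim P_i$, where $P_i$ is the marginal distribution of $\sum_{k\neq i} b_{kj}$ induced by the fixed equilibrium $\B_{-i}$. In particular, $a_i$ is a distribution over bids that does not depend on the realized bid profile $\b_{-i}$; it depends only on the fixed objects $\v$, $\o^{\v}$ and the marginal $P_i$. Hence the CCE condition extends to this randomized deviation by taking expectations on both sides, yielding, for every agent $i$,
\[
\E_{\b\sim \B}[u_i(\b)] \;\ge\; \E_{\b\sim \B}\bigl[u_i\bigl(a_i(\v,\B_{-i}),\,\b_{-i}\bigr)\bigr].
\]

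Next, I would sum this inequality over all $i$ and apply Lemma \ref{lem:proportion} to the right-hand side to obtain
\[
\sum_i \E_{\b\sim \B}[u_i(\b)] \;\ge\; \frac{1}{2}\sum_i v_i(o_i^{\v}) \;-\; \sum_i \sum_j \E_{\b\sim\B}[b_{ij}] \;=\; \frac{1}{2}\SW(\o^{\v}) - \sum_{i,j}\E[b_{ij}].
\]
On the other hand, the definition of utility gives $u_i(\b) = v_i(x_i(\b)) - \sum_j b_{ij}$, so
\[
\sum_i \E[u_i(\b)] \;=\; \E[\SW(\b)] - \sum_{i,j}\E[b_{ij}].
\]
The total-bid terms on both sides are identical, so they cancel, leaving $\E[\SW(\b)] \ge \tfrac{1}{2}\SW(\o^{\v})$. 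Taking the worst case over instances and CCE $\B$ yields $\mathrm{PoA}\le 2$.

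I do not anticipate a real obstacle here: all of the work has been done inside Lemma \ref{lem:proportion}, whose construction was specifically tailored so that the ``overbidding'' penalty $\sum_i\sum_j \E[b_{ij}]$ on the right equals exactly the aggregate payment term appearing in $\sum_i \E[u_i(\b)]$. The only step that deserves care is the one flagged above — confirming that $a_i$ is measurable with respect to information available to agent $i$ at a CCE (namely, the marginal of others' bid sums) and is independent of $\b_{-i}$, so that it qualifies as a valid unilateral deviation in the CCE inequality.
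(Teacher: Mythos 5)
Your proposal is correct and follows essentially the same route as the paper: apply the coarse correlated equilibrium condition with the deviation $a_i(\v,\B_{-i})$ from Lemma~\ref{lem:proportion}, sum over agents, and rearrange so the aggregate bid term cancels against the payments in $\sum_i u_i(\B)$. Your added check that $a_i$ depends only on $\v$, $\o^{\v}$ and the marginal $P_i$ (not on the realized $\b_{-i}$) is a worthwhile point of care that the paper leaves implicit, but it does not change the argument.
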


\begin{proof}
  Let $\B$ be any coarse correlated equilibrium (note that $\v$ is
  fixed). By Lemma \ref{lem:proportion} and the definition of the
  coarse correlated equilibrium, we have
  \[ \sum_iu_i(\B)\ge \sum_iu_i(a_i(\v,\B_{-i}), \B_{-i})\ge \frac 12
  \sum_iv_i(o_i)-\sum_i\sum_{j}\E[b_{ij}]\]
  By rearranging terms,
  $\SW(\B)=\sum_iu_i(\B)+\sum_i\sum_{j}\E[b_{ij}]\ge \frac 12 \cdot
  \SW(\o)$.
\end{proof}

\begin{theorem}
  \label{thm:pro_bayes}
  The Bayesian PoA of 
  the proportional allocation mechanism with subadditive agents is at most
  $2$.
\end{theorem}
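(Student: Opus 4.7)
The plan is to adapt Theorem~\ref{thm:pro_mixed} to the Bayesian setting via the hypothetical-sampling paradigm of Syrgkanis and Tardos~\cite{ST13}. Fix any Bayesian Nash equilibrium $\B$. For each agent $i$ of realised type $v_i$, I define a randomised deviation $a_i(v_i)$ as follows: privately and independently of the actual play, sample a hypothetical valuation profile $\v'_{-i}\sim\mathbf{D}_{-i}$ together with hypothetical opponents' bids $\b'_{-i}\sim\B_{-i}(\v'_{-i})$, and bid $a_{ij}=o^{(v_i,\v'_{-i})}_{ij}\cdot\sum_{k\ne i}b'_{kj}$ on each resource $j$. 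This is exactly the deviation from Lemma~\ref{lem:proportion}, but applied to the hypothetical profile $(v_i,\v'_{-i})$ instead of the unknown true one; because it is a randomised function of $v_i$ alone, it is admissible as a Bayesian deviation.

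Plugging $a_i(v_i)$ into the Bayesian Nash inequality, taking expectation over $v_i\sim D_i$, and summing over $i$ yields
\[
\sum_i\E_\v[u_i(\B(\v))]\;\ge\;\sum_i\E\bigl[u_i\bigl(a_i(v_i,\v'_{-i},\b'_{-i}),\,\b_{-i}\bigr)\bigr],
\]
where the right-hand expectation is joint over $\v$, $\b\sim\B(\v)$, $\v'_{-i}$, and $\b'_{-i}$. The key structural observation is that, conditional on $v_i$, the pairs $(\v_{-i},\b_{-i})$ and $(\v'_{-i},\b'_{-i})$ are i.i.d.\ samples from the same marginal distribution, so I can symmetrise the right-hand side via the swap $(\v_{-i},\b_{-i})\leftrightarrow(\v'_{-i},\b'_{-i})$.

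From here I would follow the same three-step chain as in Lemma~\ref{lem:proportion}: weaken each fraction's denominator to $p_{ij}+p'_{ij}$ (where $p_{ij}=\sum_{k\ne i}b_{kj}$ and likewise for $p'_{ij}$) using $o^{(\cdot)}_{ij}\le 1$ and monotonicity of $v_i$, and then invoke subadditivity in the form $v_i\bigl(\tfrac{o\,p'_{ij}}{p_{ij}+p'_{ij}}\bigr)+v_i\bigl(\tfrac{o\,p_{ij}}{p_{ij}+p'_{ij}}\bigr)\ge v_i(o)$, exploiting that the two fractions sum componentwise to $o$. The expected payment of agent $i$'s deviation, by the same i.i.d.\ rename $(\v'_{-i},\b'_{-i})\leftrightarrow(\v_{-i},\b_{-i})$, equals $\sum_{k\ne i,\,j}\E_{\v,\b}[o^{\v}_{ij}\,b_{kj}]$, which upon summing over $i$ and using $\sum_i o^\v_{ij}\le 1$ is bounded by $\sum_{i,j}\E[b_{ij}]$. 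Combining these with $\E_\v[\SW(\B(\v))]=\sum_i\E[u_i]+\sum_{i,j}\E[b_{ij}]$ then gives $\E_\v[\SW(\B(\v))]\ge\tfrac12\E_\v[\SW(\o^\v)]$, i.e.\ $\mathrm{PoA}\le 2$.

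The main obstacle, compared with the coarse-correlated proof, is that after the i.i.d.\ swap the two copies of the fraction carry \emph{different} optimum vectors $o^{(v_i,\v'_{-i})}_i$ and $o^{(v_i,\v_{-i})}_i$, rather than a single fixed $o^\v_i$ as in the complete-information case. The cleanest resolution is to apply the subadditivity identity above on each copy \emph{before} taking the swap expectation, so that the argument of $v_i$ remains a single (hypothetical) optimum, and then use the i.i.d.\ identity $\E_{\v'_{-i}}[v_i(o^{(v_i,\v'_{-i})}_i)]=\E_\v[v_i(o^\v_i)]$ to replace the hypothetical optimum by the true-optimum benchmark in expectation. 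Making these expectations rigorous while keeping the payment bound tight is the delicate point that separates the Bayesian proof from the coarse-correlated one.
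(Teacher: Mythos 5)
Your overall strategy --- extending Lemma~\ref{lem:proportion} to the Bayesian setting by letting each agent privately sample a hypothetical profile and bid against the aggregate bid distribution $\C_{-i}$ --- is exactly the paper's, and your handling of admissibility and of the payment term is fine. The gap is in how you construct the deviation: you draw a \emph{single} hypothetical sample $(\v'_{-i},\b'_{-i})$ and use it both to define the optimum $o^{(v_i,\v'_{-i})}$ and to set the scaling bids $p'_{ij}=\sum_{k\neq i}b'_{kj}$. This coupling is what creates the ``two different optima'' problem you flag at the end, and your proposed resolution does not repair it. The swap in Lemma~\ref{lem:proportion} must be performed \emph{with the optimum vector held fixed}: there one swaps only the two bid samples, which is legitimate because $b'_i$ and $p_i$ are i.i.d.\ and independent of everything that defines $o_i^{\v}$. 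In your construction, conditional on $\v'_{-i}$ the hypothetical aggregate $p'_i$ is distributed according to the bids induced by $\B_{-i}(\v'_{-i})$, while the real aggregate $p_i$ follows the marginal distribution $P_i$; these are not exchangeable, so you cannot swap the bids while keeping $o^{(v_i,\v'_{-i})}$ in place. Swapping the full pairs, as you propose, is a valid distributional identity but leaves you with $\tfrac12\bigl[v_i\bigl(\tfrac{o'p'}{o'p'+p}\bigr)+v_i\bigl(\tfrac{op}{op+p'}\bigr)\bigr]$ with $o\neq o'$; subadditivity then only yields $v_i$ evaluated at a componentwise convex combination of $o$ and $o'$, which for a general subadditive $v_i$ admits no useful lower bound in terms of $v_i(o)$ or $v_i(o')$. ``Applying the subadditivity identity on each copy before the swap'' is not an available operation: subadditivity needs the second summand that only the swap produces.

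The fix is to decouple the two hypothetical samples, which is precisely what the paper does. Draw $\w_{-i}\sim\D_{-i}$ to define $o^{(v_i,\w_{-i})}$, and \emph{independently} draw the scaling bids $b'_i\sim P_i$ from the marginal distribution of $p_i$ under $\C_{-i}$ (i.e.\ from a fresh, independent sample of opponents' types and bids). Then, conditional on $(v_i,\w_{-i})$, the vectors $b'_i$ and $p_i$ are i.i.d., the swap of Lemma~\ref{lem:proportion} goes through verbatim with $o^{(v_i,\w_{-i})}$ fixed, and the identity $\E_{\w_{-i}}[v_i(o_i^{(v_i,\w_{-i})})]=\E_{\v_{-i}}[v_i(o_i^{\v})]$ converts the hypothetical benchmark into the true one. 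With that single change to the deviation, the rest of your outline (the Bayesian Nash inequality, the payment bound via $\sum_i o^{\v}_{ij}\le 1$, and the final rearrangement) is correct.
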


\begin{proof}

  Let $\B$ be any Bayesian Nash Equilibrium and let $v_{i}\sim D_{i}$ be
  the valuation of each agent $i$ drawn independently from $D_i$.  We
  denote by $\C=(C_1,C_2,\ldots,C_n)$ the bidding distribution in $\B$
  which includes the randomness of both the bidding strategy $\b$ and
  of the valuations $\v$. 
	The utility of
  agent $i$ with valuation $v_i$ can be expressed by $u_i(\B_i(v_i),
  \C_{-i})$. It should be noted that $\C_{-i}$ does {\em not} depend on some particular $\v_{-i}$, 
	but merely on $\D_{-i}$ and $\B_{-i}$. For any agent $i$ and any subadditive
  valuation $v_i\in V_i$, consider the deviation $a_i(v_i;\w_{-i},\C_{-i})$ as defined
  in Lemma \ref{lem:proportion}, where $\w_{-i}\sim \D_{-i}$. By the definition of the Bayesian Nash
  equilibrium, we obtain
  \[\E_{\v_{-i}}[u^{v_i}_i(\B_i(v_i),\B_{-i}(\v_{-i}))]= u^{v_i}_i(\B_i(v_i),\C_{-i}) 
	\ge \E_{\w_{-i}}[u^{v_i}_i(a_i(v_i;\w_{-i},\C_{-i}), \C_{-i})].\]
  By taking expectation over $v_i$ and summing up over all agents, 
  \begin{align*}
    &\sum_i \E_{\v}[u_i(\B(\v))] \ge\sum_i\E_{v_i,\w_{-i}}[u^{v_i}_i(a_i(v_i;\w_{-i},\C_{-i}),\C_{-i})]\\
    =&\E_{\v}\left[\sum_iu_i^{v_i}(a_i(\v,\C_{-i}),
      \C_{-i})\right]\ge\frac 12\cdot\sum_i\E_{\v}[v_i(o_i^{\v})]-\sum_i\sum_j\E[b_{ij}]
  \end{align*}
  So, $\E_{\v}[\SW(\B(\v))]= \sum_i\E_{\v}[u_i(\B(\v))]+ \sum_i\sum_j\E[b_{ij}]\ge
  \frac 12 \cdot \E_{\v}[\SW(\o^{\v})].$
\end{proof}

\subsection{Simple mechanisms lower bound}
\label{sec:simpleLB}
Now, we show a lower bound that applies to
all simple mechanisms, where the bidding space has size (at most)
sub-doubly-exponential in $m$. More specifically, we apply the general
framework of Roughgarden~\cite{Rou14}, for showing lower bounds on the
price of anarchy for {\em all} simple mechanisms, via communication
complexity reductions with respect to the underlying optimization problem. 
In our setting, the problem is to maximize the social welfare by
allocating divisible resources to agents with subadditive valuations. 
We proceed by proving a communication lower bound for this problem in
the following lemma.

\begin{lemma}
  \label{lem:inappr}
  For any constant $\varepsilon>0$, any
  $(2-\varepsilon)$-approximation (non-deterministic) algorithm for
  maximizing social welfare in resource allocation problem with
  subadditive valuations, requires an exponential amount of
  communication.
\end{lemma}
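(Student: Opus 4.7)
The plan is to establish Lemma~\ref{lem:inappr} by reducing from the known $2-\varepsilon$ communication complexity lower bound for welfare maximization in combinatorial auctions with subadditive valuations in the \emph{indivisible}-item setting, originally due to Dobzinski, Nisan and Schapira. From any indivisible instance with subadditive set functions $u_i:2^{[m]}\to\mathbb{R}_+$, I will produce a divisible instance with subadditive valuations $v_i:[0,1]^m\to\mathbb{R}_+$ whose optimum and hardness structure mirror those of the indivisible instance.

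The reduction must supply an extension $v_i$ of $u_i$ with four properties: (a) $v_i(\mathbf{1}_S)=u_i(S)$ for every $S\subseteq[m]$; (b) $v_i$ is monotone and subadditive on the continuous cube $[0,1]^m$; (c) on the hard family of instances appearing in the indivisible lower bound, the optimal divisible welfare $\sum_i v_i(x_i)$ (subject to $\sum_i x_{ij}\le 1$) equals the optimal indivisible welfare; and (d) a value query to $v_i$ at a fractional point can be simulated by polynomially many value or demand queries to $u_i$, preserving the communication model underlying Roughgarden's framework~\cite{Rou14}. Given (a)--(d), any $(2-\varepsilon)$-approximation with subexponential communication for the divisible problem would, restricted to this family, yield a $(2-\varepsilon)$-approximation for the indivisible problem with essentially the same communication, contradicting the known lower bound.

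The main obstacle is the construction in (b)--(c); several natural candidates fall short. The concave closure of $u_i$ is not guaranteed to be subadditive in several variables (monotone concave functions with $v_i(\mathbf{0})=0$ can fail subadditivity once the domain is multi-dimensional). The support extension $v_i(\x)=u_i(\mathrm{supp}(\x))$ is subadditive but blows up the divisible optimum to $\sum_i u_i([m])$, since tiny positive fractions already yield full value, so the integrality gap is unbounded. The Lov\'asz-style extension $v_i(\x)=\int_0^1 u_i(\{j:x_j\ge t\})\,dt$ only satisfies a relaxed form of subadditivity with constant $2$. The delicate step is thus to design an extension exploiting the specific fooling-set structure of the Dobzinski--Nisan--Schapira hard instances so that (b) holds exactly and the fractional optimum is attained by an integral allocation, at which point the rest of the reduction is routine and the claimed exponential communication lower bound follows.
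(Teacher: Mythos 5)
There is a genuine gap: your argument's central object---an extension of a subadditive set function $u_i$ to a function $v_i$ on $[0,1]^m$ that is \emph{exactly} subadditive, agrees with $u_i$ on the vertices of the cube, and introduces no integrality gap on the hard instances---is never constructed. You correctly observe that the concave closure, the support extension, and the Lov\'asz-style extension all fail one of your requirements (b)--(c), but the proposal then stops at ``design an extension exploiting the fooling-set structure,'' which is precisely the step that carries all the difficulty. As written, this is a plan with the key lemma missing, not a proof. A secondary issue: your requirement (d) about simulating value queries is unnecessary in the (non-deterministic) communication model, where each player knows her own valuation in full; only gap preservation matters.

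The paper sidesteps the extension problem entirely by \emph{not} reducing from the subadditive indivisible lower bound of Dobzinski--Nisan--Schapira. Instead it starts from Nisan's lower bound for combinatorial auctions with \emph{general} valuations (gap $n$ versus $1$ on a fooling-set family where each $v_i$ is $\{0,1\}$-valued), and builds fresh subadditive fractional valuations directly: it rounds a fractional allocation via the threshold set $A_{x_i}=\{j: x_{ij}>\tfrac12\}$ (so that no two players can both ``win'' a resource) and sets $v'_i(x_i)=v_i(A_{x_i})+1$ for $x_i\neq\mathbf{0}$. The additive $+1$ forces subadditivity for free, since $v'_i$ takes values only in $\{0,1,2\}$, at the price of degrading the gap from $n$ vs.\ $1$ to $2n$ vs.\ $n+1$---which still yields the factor $2-\varepsilon$. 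If you want to salvage your route, the most promising fix is to abandon requirement (a) (faithful extension) and instead mimic this ``threshold plus offset'' construction; insisting on an exact subadditive extension of an arbitrary subadditive set function to the cube is likely a dead end, as your own list of failed candidates suggests.
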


\begin{proof}
  We prove this lemma by reducing the communication lower bound for
  combinatorial auctions with general valuations (Theorem 3 of
  \cite{Nis02}) to our setting (see also \cite{DNS10} for a reduction
  to combinatorial auctions with subadditive agents). 

  Nisan~\cite{Nis02} used an instance with $n$ players and $m$ items,
  with $n < m^{1/2-\varepsilon}$. Each player $i$ is associated with a
  set $T_i$, with $|T_i| = t$ for some $t>0$.  At every instance of
  this problem, the players' valuations are determined by sets $I_i$
  of bundles, where $I_i \subseteq T_i$ for every $i$.  Given $I_i$,
  player $i$'s valuation on some subset $S$ of items is $v_i(S)=1$, if
  there exists some $R\in I_i$ such that $R\subseteq S$, otherwise
  $v_i(S)=0$. In \cite{Nis02}, it was shown that distinguishing
  between instances with optimal social welfare of $n$ and $1$,
  requires $t$ bits of communication. By choosing $t$ exponential in
  $m$, their theorem follows.

  We prove the lemma by associating any valuation $v$ of the above
  combinatorial auction problem, to some appropriate subadditive
  valuation $v'$ for our setting.  For any player $i$ and any
  fractional allocation $\x = (x_1,\ldots, x_m)$, let $A_{x_i} = \{j|x_{ij} >
  \frac 12\}$. We define $v'_i(x_i)=v_i(A_{x_i})+1$ if $x_i\neq
  \mathbf{0}$ and $v'_i(x_i)=0$ otherwise. It is easy to verify that
  $v'_i$ is subadditive. 
  Notice that $v'_i(x) =2$ only if there exists $R\in I_i$ such that
  player $i$ is allocated a fraction higher than $1/2$ for every
  resource in $R$. The value $1/2$ is chosen such that no two players
  are assigned more than that fraction from the same resource. This
  corresponds to the constraint of an allocation in the combinatorial auction where no
  item is allocated to two players.

  Therefore, in the divisible goods allocation problem, distinguishing
  between instances where the optimal social welfare is $2n$ and $n+1$
  is equivalent to distinguishing between instances where the optimal
  social welfare is $n$ and $1$ in the corresponding combinatorial
  auction and hence requires exponential, in $m$, number of
  communication bits.
\end{proof}

The PoA lower bound follows the general reduction described in
\cite{Rou14}.

\begin{theorem}
\label{thm:LBsimple}
The PoA of $\epsilon$-mixed Nash equilibria\footnote{A bidding profile $\B=\times_i B_i$ 
is called $\epsilon$-mixed Nash equilibrium if, for every agent $i$ and all bids $b'_i$, 
$\E_{\b\sim\B}[u_i(\b)]\ge \E_{\b\sim\B}[u_i(b'_i,\b_{-i})] - \epsilon$.} of every simple mechanism,  
when agents have subadditive valuations, is at least $2$. 
\end{theorem}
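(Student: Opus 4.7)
The plan is to apply Roughgarden's general reduction \cite{Rou14} from PoA bounds to communication complexity lower bounds, instantiated with the hard problem identified in Lemma \ref{lem:inappr}. We proceed by contradiction: suppose some simple mechanism $M$ has $\epsilon$-mixed Nash PoA at most $2-\delta$ for some constant $\delta>0$ on every instance with subadditive valuations.

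Given any instance of subadditive welfare maximization in which each agent $i$ privately holds $v_i$, we construct a non-deterministic communication protocol. The prover guesses a mixed strategy profile $\B=\times_i B_i$ in $M$ together with a target welfare $W$; since $M$ is simple, each action set $A_i$ has size at most sub-doubly-exponential in $m$, so $\log|A_i|$ is sub-exponential and each $B_i$ can be described to the required precision using sub-exponentially many bits and broadcast to all agents. Each agent $i$ then locally verifies the $\epsilon$-best-response condition $\E_{\b\sim\B}[u_i(\b)]\ge\E_{\b\sim\B}[u_i(b'_i,\b_{-i})]-\epsilon$ for every $b'_i\in A_i$ using only $v_i$, and the agents jointly certify $\E_{\b\sim\B}[\SW(\b)]\ge W$. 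Since the finite-action game always admits an $\epsilon$-mixed Nash equilibrium, and the PoA hypothesis forces every such equilibrium to have expected welfare at least $\SW(\o)/(2-\delta)$, the protocol can non-deterministically certify a $(2-\delta)$-approximation of $\SW(\o)$ using only sub-exponential communication.

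Choosing $\varepsilon=\delta/2$ in Lemma \ref{lem:inappr} then produces a contradiction, since subadditive welfare maximization requires exponential communication for any $(2-\varepsilon)$-approximation. The main obstacle is purely technical and concerns the black-box use of \cite{Rou14}: one must check that finite-precision representations of mixed strategies degrade the PoA guarantee by an arbitrarily small amount, and that the expected utility and welfare quantities can indeed be verified from a succinct certificate rather than recomputed from scratch. These are precisely the ingredients assembled in Roughgarden's framework, so the only content specific to our setting is the inapproximability of divisible subadditive welfare established in Lemma \ref{lem:inappr}.
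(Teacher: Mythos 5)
Your proposal follows essentially the same route as the paper: the paper's entire proof of Theorem~\ref{thm:LBsimple} is a one-line invocation of Roughgarden's reduction from~\cite{Rou14}, with all of the problem-specific content residing in Lemma~\ref{lem:inappr}, which you correctly identify and deploy. One step of your own reasoning is imprecise, though, and it happens to be the crux of why the theorem is stated for $\epsilon$-mixed rather than exact equilibria: you argue that each $B_i$ can be broadcast with sub-exponentially many bits because $\log|A_i|$ is sub-exponential, but a generic mixed strategy over a sub-doubly-exponential action set requires on the order of $|A_i|$ probability values to describe, which is not sub-exponential. The correct justification, which is the heart of Roughgarden's framework, is that every finite game admits an $\epsilon$-approximate mixed Nash equilibrium whose support size is polynomial in $\log|A_i|$ and $1/\epsilon$ (the Lipton--Markakis--Mehta sparsification), and it is this sparse equilibrium that the prover guesses; the PoA hypothesis is then applied to it, which is exactly where the restriction to $\epsilon$-equilibria enters. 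With that substitution your argument is complete and matches the paper's intended proof.
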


\begin{remark}
  This result holds only for $\epsilon$-mixed Nash
  equilibria. Considering exact Nash equilibria, we 
  show a lower bound for all \emph{scale-free} mechanisms 
  in the following section.
\end{remark}

\subsection{Scale-free mechanisms lower bound}
\label{sec:scalefree}
Here we prove a tight lower bound for all scale-free mechanisms including 
the proportional allocation mechanism. 
A mechanism $(\x,\q)$ is said to be scale-free if a) for every
agent $i$, resource $j$ and constant $c>0$,
$x_{i}(c\cdot \b_j) = x_{i}(\b_j)$.  Moreover, for a fixed
$\b_{-i}$, $x_{i}(\cdot)$ is non-decreasing and positive whenever
$b_{ij}$ is positive. 
b) The payment for agent $i$ depends only on her bids
$b_i = (b_{ij})_j$ and equals to $\sum_{j\in [m]} q_i(b_{ij})$ where
$q_{i}(\cdot)$ is non-decreasing, continuous, normalized
($q_{i}(0)=0$), and there always exists a bid $b_{ij}$ such that
$q_{i}(b_{ij})>0$.
 
\begin{theorem}
  \label{thm:generalpayLBSub} The mixed PoA of scale-free mechanisms
  when agents have subbaditive valuations, is at least $2$.
\end{theorem}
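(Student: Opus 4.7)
My plan is to prove, for every scale-free mechanism $(\x,\q)$ satisfying the stated conditions, the existence of an instance with subadditive valuations whose mixed Nash equilibrium has social welfare at most half of the optimum. The argument exploits both the scale-invariance $x_i(c\,\b_j)=x_i(\b_j)$ and the rigid per-agent per-resource structure of $q_i(\cdot)$, since these two ingredients together determine the best-response geometry irrespective of the particular closed form of the mechanism.

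My first step is to extract the one-parameter shape of a scale-free mechanism: on each resource $j$ with two positive bidders, $x_{ij}$ depends only on the ratio $b_{1j}/b_{2j}$, so it can be encoded by a non-decreasing function $f\colon[0,\infty)\to[0,1]$ with $f(0)=0$, $f(\infty)=1$, and $f(1)=1/2$ by symmetry, while $q_i$ is continuous, non-decreasing, $0$-normalized, and positive at some bid. My second step is to build a tight instance tailored to these functions: two agents sharing a suitable number of identical resources, with subadditive valuations chosen so that the social optimum concentrates on an allocation of welfare approaching $2$ while any symmetric mixed profile spreads bids too evenly across the resources to approximate it, yielding welfare approaching $1$.

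My third step, and the main obstacle, is verifying that the proposed symmetric mixed profile is a Nash equilibrium for \emph{every} scale-free mechanism. I would parameterize the mixed profile by a single bid magnitude $B$ and fix $B$ via an intermediate-value argument on the continuous function $q_i$, using the positivity axiom to guarantee such a $B$ exists. Unilateral deviations then split into three classes: (i) changing the bid magnitude on the same resource, controlled by scale-freeness together with monotonicity of $q_i$; (ii) switching to a different resource, handled by the symmetry of the instance; (iii) spreading the bid across multiple resources, ruled out by subadditivity of the valuation, which ensures the valuation gain is dominated by the sum of payment increases. The tightness at $2$ matches the upper bound of Theorem \ref{thm:pro_mixed}: the subadditive valuations must be chosen so as to saturate the subadditivity inequality, walking a delicate line between strict subadditivity (so that distinct bids carry distinguishing information and the equilibrium is not degenerate) and near-equality (so that the social welfare ratio approaches exactly $2$). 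The universal quantification over scale-free mechanisms forces the entire verification to be carried out abstractly in terms of $f$ and $q_i$, using only monotonicity, continuity, and scaling invariance; this is the principal technical difficulty.
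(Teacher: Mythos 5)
There is a genuine gap, and it sits exactly where you flag ``the principal technical difficulty'': you never actually construct the valuations or the equilibrium, and the symmetric scheme you sketch would not produce a factor-$2$ gap. Two specific problems. First, your normalization $f(1)=1/2$ ``by symmetry'' is unjustified: the paper's definition of scale-free does not require anonymity across agents, and the proof must work for \emph{arbitrary} values $h_1=x_1(a,a)$ and $h_2=x_2(a,a)$ in $(0,1)$; the paper handles this by hard-wiring these two thresholds into the valuation functions rather than assuming they equal $1/2$. Second, and more fundamentally, the construction cannot be symmetric. The factor $2$ in the paper comes from a deliberate asymmetry: agent $1$'s (three-valued, subadditive) valuation jumps from $v$ to $2v$ once her allocation exceeds $h_1$ on \emph{some} resource, while agent $2$'s jumps from $V$ to $2V$ only when her allocation exceeds $h_2$ on \emph{every} resource, with $v=V/\sqrt m$. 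The optimum realizes both jumps ($2V+2v$), whereas at equilibrium agent $2$ bids $0$ with probability about $1-1/\sqrt m$ and so almost always misses her jump, pinning the equilibrium welfare near $V$. A ``symmetric mixed profile spread too evenly'' with valuations that merely ``walk a delicate line'' near additivity gives no mechanism-independent reason for a factor-$2$ loss.

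The equilibrium also cannot be fixed by an intermediate-value argument on a single bid magnitude $B$: with threshold valuations of this kind no pure or near-pure profile is an equilibrium, and the paper instead uses genuinely mixed, indifference-inducing distributions whose CDFs are manufactured from the payment functions, namely $G(y)=mq_2(y)/V$ for agent $1$ (who bids on one uniformly random resource) and $F(z)=\bigl(v-q_1(T_2)+q_1(z)\bigr)/v$ for agent $2$ (who bids the same $z$ on all resources, with an atom at $0$). These make each agent's utility constant over her support, which is what the verification reduces to. Finally, your class (iii) deviations (spreading bids over several resources) are not ``ruled out by subadditivity of the valuation''---subadditivity is a hypothesis on the instance you get to choose, not a constraint on deviations; in the paper they are dominated because agent $1$'s value depends only on whether her \emph{maximum} bid beats $z$ while her payment is the sum $\sum_j q_1(y_j)$, so any second positive bid adds cost without adding value. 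Without these three ingredients---the asymmetric threshold valuations keyed to $h_1,h_2$, the asymmetric supports, and the payment-driven indifference distributions---the proposal does not yield a proof.
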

\begin{proof}
  Given a mechanism $(\x,\q)$, we construct an instance with $2$
  agents and $m$ resources. Let $V$ be a positive value such that
  $V/m$ is in the range of both $q_1$ and $q_2$. This can be always
  done due to our assumptions on $q_i$. Let $T_1$ and $T_2$ be the
  values such that $q_1(T_1) = q_2(T_2)=V/m$. W.l.o.g. we assume that
  $T_1 \geq T_2$. By monotonicity of $q_1$, $q_1(T_2) \leq V/m$. 
  Pick an arbitrary value $a\in (0,1)$, and let $h_1= x_1(a,a)$ and $h_2 =
  x_2(a,a)$. 
  By the assumption that $x_{i}(\b_j) > 0$ for $b_{ij}
  > 0$, we have $h_1, h_2 \in (0,1)$. Let $v= V/\sqrt{m}$. We define
  the agents' valuations as:
\begin{center}
  \begin{tabular}{cc}
$v_1(x) = \left\{
\begin{array}{l l}
0, & \text{if }  \forall j\in[m], x_{j}=0,  \\
v, & \text{if }  \forall j \; x_{j}<h_1, \;  \exists k \; x_{k} > 0 \\
2v, & \textrm{otherwise} \\
\end{array} \right.$ \qquad & \qquad
$v_2(x)= \left\{
\begin{array}{l l}
0, &  \text{if }  \forall j\in[m], \; x_{j}=0 \quad \\
V, &  \text{if }  \exists j\; x_{j} < h_2, \; \exists k \; x_{k} > 0\\
2V, & \text{otherwise}\\
\end{array} \right.$
\end{tabular}
\end{center}

We claim that the following mixed strategy profile $\B$ is a Nash
equilibrium.  Agent $1$ picks resource $l$ uniformly at random and
bids $b_{1l}=y$, and $b_{1k}=0,$ for $k\neq l$, where $y$ is a random
variable drawn by the cumulative distribution $G(y)=\frac{mq_{2}(y)}{V}, \;\;y\in [0,T_2].$
Agent $2$ bids $b_{2j}=z$ for every item $j$, where $z$ is a random
variable drawn from $F(z)$, defined as
$F(z)=\frac{v-q_1(T_2)+q_{1}(z)}{v},\;\;z\in [0,T_2].$
Recall that $v=V/\sqrt{m}$ and $q_1(T_2) \leq V/m$. Therefore,
$v-q_1(T_2) \geq 0$ and thus $F(0) \geq 0$.  Notice that $G(\cdot)$
and $F(\cdot)$ are valid CDFs, due to monotonicity of $q_i(\cdot)$.
Since $G(T_2) = 1$, $F(T_2)=1$ and $q_i(\cdot)$ is continuous, $G(y)$
and $F(y)$ are continuous in $(0,\infty)$ and therefore both functions
have no mass point in any $y\neq 0$. We assume that if both agents bid
$0$ for some resource, agent $2$ takes the whole resource.
We are ready to show that $\B$ is a Nash equilibrium. For the
following arguments notice that $G(T_2) = 1$, $F(T_2)=1$ and
$G(0) = 0$.

If agent $1$ bids any $y$ in the range $(0,T_2]$ for a single resource
$j$ and zero for the rest, then she gets allocation of at least
$h_1$ (that she values for $2v$), only if $y \geq z$, which happens
with probability $F(y)$. This holds due to monotonicity of
$x_1(\cdot)$ with respect to $y$.  Otherwise her value is
$v$. Therefore, her expected valuation is $v + F(y)v$. So, for every
$y \in (0,T_2]$ her expected utility is
$v+F(y)v-q_1(y) = 2v-q_1(T_2)$.  If agent $1$ picks $y$ according to
$G(y)$, her utility is still $2v-q_1(T_2)$, since she bids $0$ with
zero probability. Suppose agent $1$ bids
$\mathbf{y} = (y_1,\ldots , y_m)$, $y_j \in [0,T_2]$ for every $j$,
with at least two positive bids, and w.l.o.g., assume
$y_1 = \max_j y_j$. If $z > y_1$, agent $1$ has value $v$ for the
allocation she receives.  If $z \leq y_1$, agent $1$ has value $2v$,
but she pays more than $q_1(y_1)$. So, this strategy is dominated by
the strategy of bidding $y_1$ for the first resource and zero for the
rest.  Bidding greater than $T_2$ for any resource is dominated by the
strategy of bidding exactly $T_2$ for that resource.

If agent $2$ bids $z \in [0, T_2]$ for all resources, she gets an
allocation of at least $h_2$ for all the $m$ resources with
probability $G(z)$ (due to monotonicity of $x_2(\cdot)$ with respect
to $z$ and to the tie breaking rule). So, her expected utility is
$V + G(z)V - mq_2(z) = V$. Bidding greater than $T_2$ for any resource
is dominated by bidding exactly $T_2$ for this resource. Suppose that
agent $2$ bids any $\mathbf{z} = (z_1,\dots z_m)$, with
$z_j \in [0,T_2]$ for every $j$, then, since agent $1$ bids positively
for any item with probability $1/m$, agent's $2$ expected utility is
$\frac{1}{m}\sum_{j} \left(V + G(z_j)V-\sum_k q_2(z_k) \right)
=\frac{1}{m}\sum_j \left(V + mq_2(z_j)-\sum_{k} q_2(z_k)\right)
=\frac{1}{m}\left( mV+m\sum_{j} q_2(z_j) -m\sum_{k} q_2(z_k) \right) =
V.$ So, $\B$ is Nash equilibrium. 

Therefore, it is sufficient to bound the expected social welfare in
$\B$. Agent $1$ bids $0$ with zero probability. So, whenever 
agent $2$ bids $0$, she receives exactly $m-1$ resources, which 
she values for $V$. Agent $2$ bids $0$ with probability $F(0) = 1-\frac{q_1(T_2)}{v}
\geq 1-\frac{V}{mv} = 1-\frac{1}{\sqrt{m}}$. Hence, 
$\E[\SW(\B)]
\leq 2V - F(0)\cdot V + 2v \leq 2V\left(1+ \frac{1}{\sqrt{m}}\right) -
V\left(1-\frac{1}{\sqrt{m}}\right) =V
\left(1+\frac{3}{\sqrt{m}}\right).$
On the other hand, the social welfare in the optimum
allocation is $2(V+v)=2V\left(1+\frac{1}{\sqrt{m}}\right)$ (agent $1$
is allocated $h_1$ proportion from one resource and the rest is
allocated to agent $2$). We conclude that $PoA \geq
2\frac{\left(1+\frac{1}{\sqrt{m}}\right)}{\left(1+\frac{3}{\sqrt{m}}\right)}$
which, for large $m$, converges to $2$.
\end{proof}


\section{Budget Constraints}
\label{sec:budgets}
In this section, we switch to scenarios where agents have budget
constraints. We use as a benchmark the {\em effective welfare}
similarly to \cite{CV14,ST13}. We compare the effective welfare of the
allocation at equilibrium with the optimal effective welfare. We
prove an upper bound of $\phi + 1\approx 2.618$ for coarse correlated
equilibria, where $\phi = \frac{\sqrt{5}+1}{2}$ is the golden
ratio. This improves the previously known $2.78$ upper bound
in~\cite{CV14} for a single resource and concave valuations.

To prove this upper bound, we use the fact that in the equilibrium
there is no profitable unilateral deviation, and, in particular, the
utility of agent $i$ obtained by any pure deviating bid $a_i$ should be
bounded by her budget $c_i$, i.e.,
$\sum_{j \in [m]} a_{ij} \leq c_i$. We define $v^{c}$ to be the
valuation $v$ suppressed by the budget $c$, i.e.,
$v^{c}(x) = \min \{v(x),c\}$. Note that $v^c$ is also subadditive
since $v$ is subadditive. For a fixed pair $(\v,\c)$, let
$\o = (o_1,\ldots,o_n)$ be the allocation that maximizes the effective
welfare.  For a fixed agent $i$ and a vector of bids $\b_{-i}$, we
define the vector $p_i$ as $p_i = \sum_{k\neq i}b_k$. We first show
the existence of a proper deviation.

\begin{lemma}
\label{lem:deviationcoarse}
For any subadditive agent $i$, and any randomized bidding profile $\B$, there
exists a randomized bid $a_i(\B_{-i})$, such that for
any $\lambda \geq 1$, it is
  $$u_i(a_i(\B_{-i}), \B_{-i})\ge 
	\frac{ v_i^{c_i}\left( o_i \right)}{\lambda+1} -
	\frac{\sum_{j\in[m]}\sum_{k\in[n]}o_{ij}\E[b_{kj}]}{\lambda}.$$
	Moreover, for any pure strategy $\hat{a}_i$ in the support of $a_i(\B_{-i})$, 
	$\sum_j \hat{a}_{ij} \leq c_i$.
\end{lemma}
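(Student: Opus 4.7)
My plan is to construct, for each $\lambda\ge 1$, a budget-feasible randomized deviation $a_i(\B_{-i})$ along the lines of Lemma~\ref{lem:proportion}. Sample $b'_i$ independently from $P_i$, the distribution of $p_i=\sum_{k\neq i}b_k$ under $\B$, and let $a_{ij}=o_{ij}b'_{ij}/\lambda$, truncated to $0$ on the event $\{\sum_j o_{ij}b'_{ij}>\lambda c_i\}$. This truncation guarantees that every pure $\hat a_i$ in the support satisfies $\sum_j\hat a_{ij}\le c_i$, as required by the ``moreover'' clause of the lemma.

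I would then bound $u_i(a_i,\B_{-i})=\E[v_i(x_i)]-\E[\sum_j a_{ij}]$ by handling the two terms separately. The expected payment is at most $\E[\sum_j o_{ij}b'_{ij}]/\lambda=\sum_{k\neq i,j}o_{ij}\E[b_{kj}]/\lambda\le \sum_{k,j}o_{ij}\E[b_{kj}]/\lambda$, matching the second term on the RHS (the $1/\lambda$ factor appearing directly from scaling the bid by $1/\lambda$). For the expected value I apply the swap trick of Lemma~\ref{lem:proportion}: by the identical distribution and independence of $b'_i$ and $p_i$, together with subadditivity of $v_i$, we get $v_i(X)+v_i(Y)\ge v_i(X+Y)$, where $X_j=o_{ij}b'_{ij}/(o_{ij}b'_{ij}+\lambda p_{ij})$ and $Y_j$ is its swap. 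A direct computation of the proportional fractions shows $X_j+Y_j\ge 2o_{ij}/(o_{ij}+\lambda)\ge 2o_{ij}/(\lambda+1)$ under the assumption $\lambda\ge 1\ge o_{ij}$.

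The role of the cap $v_i^{c_i}(o_i)$ on the RHS is to mediate between the two regimes introduced by the budget truncation. In the ``no truncation'' regime, $\sum_j o_{ij}b'_{ij}\le\lambda c_i$ and the swap bound delivers expected value at least $v_i(o_i)/(\lambda+1)$, which equals $v_i^{c_i}(o_i)/(\lambda+1)$ when $v_i(o_i)\le c_i$. In the complementary event, Markov's inequality yields $\pr[\sum_j o_{ij}b'_{ij}>\lambda c_i]\le \E[\sum_j o_{ij}b'_{ij}]/(\lambda c_i)$, and the cap $v_i^{c_i}(o_i)\le c_i$ absorbs the shortfall into the payment term, so the same bound still holds when $v_i(o_i)>c_i$.

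The main obstacle will be the passage from the swap inequality $\E[v_i(X+Y)]\ge v_i((2o_{ij}/(\lambda+1))_j)$ to a clean bound involving $v_i(o_i)/(\lambda+1)$: subadditive (but not necessarily concave) valuations do not automatically satisfy $v_i(\alpha x)\ge \alpha v_i(x)$ for real $\alpha\in(0,1)$. I expect the proof to finesse this by invoking the integer subadditivity bound $v_i(x/k)\ge v_i(x)/k$, together with the cap $v_i^{c_i}$, which together absorb any rounding loss into the existing $1/\lambda$ payment term. A secondary technical subtlety is the decomposition of the swap over the product event $\{S(b')\le\lambda c_i\}\cap\{S(p)\le\lambda c_i\}$, which must be carefully combined with the Markov tail to ensure the ``off-diagonal'' terms can be dropped without loss.
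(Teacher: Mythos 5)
Your construction departs from the paper's in exactly the place that matters for the ``moreover'' clause. The paper does not truncate the deviating bid on a global bad event; instead, for each realization $p_i$ of the others' aggregate bids it takes a \emph{maximal} set $T=T(\lambda,p_i)$ of resources with $v_i^{c_i}(\one_T)<\frac{1}{\lambda}\sum_{j\in T}o_{ij}p_{ij}$, zeroes the opponents' bids on $T$ to form $\tilde p_i$, samples $b'_i$ from the law of $\tilde p_i$, and bids $\frac{1}{\lambda}o_{ij}b'_{ij}$. Maximality plus subadditivity then give $\sum_j a_{ij}\le v_i^{c_i}(\one_{[m]\setminus T})\le c_i$ \emph{deterministically}, and the value lost by playing against the truncated profile is at most $v_i^{c_i}(\one_T)<\frac{1}{\lambda}\sum_{j\in T}o_{ij}p_{ij}$, i.e.\ exactly the slice of the payment term that the truncation frees up. Your all-or-nothing truncation on $E=\{S>\lambda c_i\}$, where $S=\sum_j o_{ij}b'_{ij}$, also gives budget feasibility, and the Markov route can be made to close --- but only with the conditional accounting: the swap restricted to $E^c(b')\cap E^c(p)$ loses at most $v_i^{c_i}(\cdot)\,\pr[E]\le c_i\,\pr[E]\le\frac{1}{\lambda}\E[S\one_E]$ (using $S>\lambda c_i$ on $E$), which is precisely the payment you do \emph{not} incur on $E$, while the payment you do incur is $\frac{1}{\lambda}\E[S\one_{E^c}]$; the two recombine to $\frac{1}{\lambda}\E[S]$. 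With the plain Markov bound $\pr[E]\le\E[S]/(\lambda c_i)$ as you state it, the payment coefficient doubles to $2/\lambda$, which breaks the downstream cancellation via $\sum_i o_{ij}\le1$ in the theorem that consumes this lemma. So: a genuinely different and workable route, provided you do the tail accounting on $\E[S\one_E]$ rather than $\E[S]$.

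The step you flag as the main obstacle is indeed the weak point, and your proposed repair does not close it. Both your argument and the paper's arrive at $\frac{1}{2}v_i^{c_i}\bigl(\frac{2o_i}{\lambda+1}\bigr)$ and need $v_i^{c_i}(\alpha o_i)\ge\alpha\,v_i^{c_i}(o_i)$ for $\alpha=\frac{2}{\lambda+1}$; the paper asserts this from ``subadditivity and $\alpha\le1$''. You are right to be suspicious: for $\alpha\approx0.764$ (the value at $\lambda=\phi$) the monotone, normalized, subadditive function with $v(x)=1$ on $(0,0.764]$ and $v(x)=2$ on $(0.764,1]$ violates it. The integer bound $v(x/k)\ge v(x)/k$ with $k=\lceil(\lambda+1)/2\rceil$ only yields $\frac{1}{2}v_i^{c_i}\bigl(\frac{2o_i}{\lambda+1}\bigr)\ge\frac{1}{2k}v_i^{c_i}(o_i)$, and $2k$ can exceed $\lambda+1$ (for $\lambda=\phi$ this gives $1/4$ against the required $1/(\phi+1)\approx0.382$); the deficit is multiplicative in $v_i^{c_i}(o_i)$ and cannot be absorbed into the additive bid term, which may be zero. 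This is a genuine gap in your writeup as it stands --- one that the paper's own one-line justification of the same step shares.
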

\begin{proof}
  In order to find $a_i(B_{-i})$, we define the truncated bid vector
  $\tilde{\b}_{-i}$ as follows. For any set $S\subseteq[m]$ of
  resources, we denote by $\one_S$ the indicator vector w.r.t. $S$,
  such that $x_j = 1$ for $j \in S$ and $x_j = 0$ otherwise.  For any
  vector $p_i$ and any $\lambda >0$, let $T:=T(\lambda,p_i)$ be a {\em
    maximal} subset of resources such that,
  $v_i^{c_i}(\one_T) < \frac{1}{\lambda} \sum_{j \in T} o_{ij}p_{ij}$.
  For every $k\neq i$, if $j \in T$, then $\tilde{b}_{kj} = 0$,
  otherwise $\tilde{b}_{kj} = b_{kj}$. Similarly,
  $\tilde{p}_i = \sum_{k\neq i}\tilde{b}_k$. Moreover, if
  $\b_{-i} \sim \B_{-i}$, then $p_i$ is an induced random variable
  with distribution denoted by $P_i=\{p_i|\b_{-i} \sim \B_{-i}\}$. 
	We further define distributions
  $\tilde{\B}_{-i}$ and $\tilde{P}_i$, as
  $\tilde{\B}_{-i} = \{\tilde{\b}_{-i}|\b_{-i} \sim \B_{-i}\}$ and
  $\tilde{P}_i = \{\tilde{p}_i|\tilde{\b}_{-i}\sim \tilde{\B}_{-i}\}$.

  Now consider the following bidding strategy $a_i(\B_{-i})$: sampling
  $b'_i\sim \tilde{P}_i$ and bidding
  $a_{ij}=\frac{1}{\lambda} o_{ij} b'_{ij}$ for each resource $j$. We
  first show $ \sum_{j\in[m]}a_{ij}\leq c_i$. It is sufficient to
  show that $\sum_{j\notin T} a_{ij} \leq v_i^{c_i}(\one_{[m] \setminus T})$ 
	since $v_i^{c_i}(\one_{[m] \setminus T}) \leq c_i$ and $\sum_{j\in T} a_{ij} = 0$. 
	For the sake of contradiction suppose
  $v_i^{c_i}(\one_{[m] \setminus T}) < \sum_{j
    \notin T} a_{ij}$.
  Then, by the definition of $T$ and $\tilde{p}_i$,
  $v_i^{c_i}(\one_{[m]}) \leq v_i^{c_i}(\one_{T}) +
  v_i^{c_i}(\one_{[m] \setminus T}) < \frac{1}{\lambda}\sum_{j\in T} o_{ij}p_{ij}+\sum_{j
    \notin T} a_{ij}=\frac{1}{\lambda} \sum_{j \in [m]}
  o_{ij}p_{ij}$, which contradicts the maximality of $T$.
 
  Next we show for any bid $b_i$ and $\lambda>0$,
  \begin{equation}
    \label{eq:truncated}
  v_i^{c_i}(x_i(b_i, \B_{-i})) + \frac{1}{\lambda} \sum_{j \in
    [m]}o_{ij} \E_{p_i \sim P_i}[p_{ij}] \geq v_i^{c_i}(x_i(b_i,
  \tilde{\B}_{-i})) + \frac{1}{\lambda} \sum_{j \in
    [m]}o_{ij}\E_{\tilde{p}_i \sim \tilde{P}_i}[\tilde{p}_{ij}]
  \end{equation}
  Observe that $x_i(b_i,\tilde{\b}_{-i})
  \leq x_i(b_i,\b_{-i}) +
  \one_T$.  Therefore, and by the definitions of
  $T$ and $\tilde{p}_i$, 
	\begin{eqnarray*}
	v_i^{c_i}(x_i(b_i,\tilde{\b}_{-i})) &\leq& v_i^{c_i}(x_i(b_i,\b_{-i})) + v_i^{c_i}(\one_T)\leq v_i^{c_i}(x_i(b_i,\b_{-i})) +\frac{1}{\lambda} \sum_{j \in T} o_{ij}p_{ij}\\ 
&=& v_i^{c_i}(x_i(b_i,\b_{-i})) +\frac{1}{\lambda} \sum_{j \in [m]} o_{ij}p_{ij} - \frac{1}{\lambda} \sum_{j \in [m]} o_{ij}\tilde{p}_{ij}.
\end{eqnarray*}
The claim follows by rearranging terms and taking the expectation of $\b_{-i}$, $\tilde{\b}_{-i}$, $p_i$ and 
$\tilde{p}_i$ over $\B_{-i}$, $\tilde{\B}_{-i}$, $P_i$ and $\tilde{P}_i$, 
respectively. We are now ready to prove the statement of the lemma. 
\begin{eqnarray*}
   &\E_{b'_i \sim \tilde{P_i}}&\left[u_i\left(\frac{1}{\lambda} o_i b'_i,\B_{-i}\right)\right] = \E_{b'_i \sim \tilde{P_i}}\left[v_i\left(\frac{1}{\lambda} o_i b'_i,\B_{-i}\right)\right] 
  - \frac{1}{\lambda} \sum_{j \in [m]} o_{ij} \E_{b'_i \sim \tilde{P_i}}\left[b'_{ij}\right]\\
  &\geq& \E_{b'_i \sim \tilde{P_i}}\left[v_i^{c_i}\left(\frac{1}{\lambda} o_i b'_i,\B_{-i}\right)\right] 
  - \frac{1}{\lambda} \sum_{j \in [m]} o_{ij} \E_{\tilde{p}_i \sim
    \tilde{P_i}}\left[\tilde{p}_{ij}\right] \quad\mbox{(by definition of $v_i^{c_i}$)}\\
  &\geq& \E_{b'_i \sim \tilde{P_i}}\left[v_i^{c_i}\left(\frac{1}{\lambda} o_i b'_i,\tilde{\B}_{-i}\right)\right] 
  - \frac{1}{\lambda} \sum_{j \in [m]} o_{ij} \E_{p_i \sim
         P_i}\left[p_{ij}\right] \quad\mbox{(by Inequality~\eqref{eq:truncated})}\\
  &\geq& \frac{1}{2}\E_{b'_i \sim \tilde{P_i}}\E_{\tilde{p}_i \sim \tilde{P_i}}
  \left[v^{c_i}_i\left(\frac{ o_i b'_i}{ o_i b'_i + \lambda \tilde{p}_i}+\frac{o_i \tilde{p}_i}{ o_i \tilde{p}_i + \lambda b'_i}\right)\right] 
  - \frac{1}{\lambda} \sum_{j \in [m]} o_{ij} \sum_{k\neq i} B_{kj}\\
  && \mbox{(by swapping $b'_i$ with $\tilde{p}_i$ and the subadditivity of $v_i^{c_i}(\cdot)$)}\\
  &\geq& \frac{1}{2}\E_{b'_i \sim \tilde{P_i}}\E_{\tilde{p}_i \sim \tilde{P_i}}
  \left[v^{c_i}_i\left(o_i \left(\frac{ b'_i}{b'_i + \lambda \tilde{p}_i}+\frac{\tilde{p}_i}{\tilde{p}_i + \lambda b'_i}\right)\right)\right] 
  - \frac{1}{\lambda} \sum_{j \in [m]}\sum_{k\in[n]}o_{ij}\E[b_{kj}]\\
&\geq& \frac{1}{2}v^{c_i}_i\left(\frac{2o_i}{\lambda+1} \right)
- \frac{1}{\lambda} \sum_{j \in [m]} o_{ij} \sum_{k} B_{kj} \quad \mbox{(by monotonicity of $v^{c_i}_i$)}\\
&\geq& \frac{1}{\lambda+1} v^{c_i}_i(o_i)-\frac{1}{\lambda}\sum_{j\in[m]}o_{ij}\sum_{k} B_{kj} \;\left(\mbox{subadditivity of $v^{c_i}_i$; $\frac{2}{\lambda+1} \leq 1$}\right)
\end{eqnarray*}
For the second inequality, notice that the second term doesn't depend on $b'_i$, 
so we apply Lemma 11 
for every $b'_i$. For the forth and fifth inequalities, 
$o_i \leq 1$ and $\frac{ b'_i}{ b'_i + \lambda\tilde{p}_i}+\frac{ \tilde{p}_i}{\tilde{p}_i + \lambda b'_i}
\geq \frac{2}{\lambda+1}$ for every $b'_i$, $\tilde{p}_i$ and $\lambda \geq 1$.
\end{proof}
We are ready to show the PoA bound by using the above lemma.
\begin{theorem}
The coarse correlated PoA for the proportional allocation
mechanism when agents have budget constraints and subadditive valuations, is at most $\phi +
1 \approx 2.618$. 
\end{theorem}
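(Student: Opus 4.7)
My plan is to apply Lemma~\ref{lem:deviationcoarse} selectively, using the parameter $\lambda = \phi$ and exploiting a golden-ratio identity. Fix a coarse correlated equilibrium $\B$, and set $V_i := \E[v_i(x_i(\b))]$, $Q_i := \E[q_i(\b)]$, $U_i := V_i - Q_i$, and $R := \sum_{k,j}\E[b_{kj}]$. The key idea is to partition the agents into $A = \{i : V_i \le c_i\}$ and $B = \{i : V_i > c_i\}$, so that $\E[\EW(\B)] = W_A + W_B$ with $W_A := \sum_{i \in A} V_i$ and $W_B := \sum_{i \in B} c_i$, and analogously $\EW(\o) = O_A + O_B$ where $O_T := \sum_{i \in T} v_i^{c_i}(o_i)$.

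For agents in $A$, I would invoke Lemma~\ref{lem:deviationcoarse} with $\lambda = \phi$. Since the deviation there respects the budget, it is a legal unilateral deviation, so the coarse correlated equilibrium inequality gives
\[
U_A \;\ge\; \sum_{i \in A}\Bigl(\frac{v_i^{c_i}(o_i)}{\phi+1} - \frac{1}{\phi}\sum_j o_{ij}\sum_k \E[b_{kj}]\Bigr) \;\ge\; \frac{O_A}{\phi+1} - \frac{R}{\phi},
\]
where the last step uses $\sum_{i \in A} o_{ij} \le 1$. Since $W_A = U_A + Q_A$ by definition of $A$, it follows that $W_A \ge \frac{O_A}{\phi+1} + Q_A - \frac{R}{\phi}$. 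For agents in $B$, I would use only the trivial budget facts $W_i = c_i \ge v_i^{c_i}(o_i)$ and $Q_i \le c_i = W_i$, which sum to $W_B \ge O_B$ and $Q_B \le W_B$.

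Adding the two contributions, writing $R = Q_A + Q_B$, and using $Q_B \le W_B$, the terms rearrange to
\[
W \;\ge\; \frac{O_A}{\phi+1} + W_B\Bigl(1 - \tfrac{1}{\phi}\Bigr) + Q_A\Bigl(1 - \tfrac{1}{\phi}\Bigr)
\;\ge\; \frac{O_A}{\phi+1} + \frac{O_B}{\phi+1}
\;=\; \frac{\EW(\o)}{\phi+1},
\]
where the last chain uses $W_B \ge O_B$, drops the nonnegative $Q_A(1 - 1/\phi)$ term, and invokes the golden-ratio identity $1 - 1/\phi = 1/\phi^2 = 1/(\phi+1)$. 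The main subtlety I anticipate is precisely this choice of $\lambda = \phi$ together with the $A$/$B$ partition: a naive application of the lemma to every agent leaves a $-R/\phi$ term that overwhelms the bound (direct computation gives only PoA $\le 4$), whereas restricting to $A$ replaces that term by $-Q_B/\phi$, which the inequality $W_B \ge Q_B$ combined with the golden-ratio identity absorbs exactly into the slack coming from $B$.
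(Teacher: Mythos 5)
Your proposal is correct and follows essentially the same route as the paper's proof: the same partition into agents with $V_i\le c_i$ versus the rest, the same invocation of Lemma~\ref{lem:deviationcoarse} with $\lambda=\phi$ for the first group, the same use of $c_i\ge v_i^{c_i}(o_i)$ and $c_i\ge Q_i$ for the second, the bound $\sum_{i\in A}o_{ij}\le 1$ to control the cross term, and the identity $1-1/\phi=1/(\phi+1)$ to close the argument. The only difference is cosmetic bookkeeping (you split $R=Q_A+Q_B$ at the end, while the paper cancels the payment terms globally), so no further comparison is needed.
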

\begin{proof}
  Suppose $\B$ is a coarse correlated equilibrium. Let $A$ be the set
  of agents such that for every $i \in A$, $v_i(\B) \leq c_i$. For
  simplicity, we use $v^{c_i}_i(\B)$ to denote $\min\{\E_{\b\sim\B}[v_i(x_i(\b))],c_i\}$.
  Then for all $i \notin A$, $v^{c_i}_i(\B)=c_i\ge v^{c_i}_i(o_i)$
  and $v^{c_i}_i(\B)=c_i\ge \sum_{j \in [m]}\E[b_{ij}]$. The latter
  inequality comes from that agents do not bid higher than their
  budgets. Let $\lambda=\phi$. So $1-1/\lambda=1/(1+\lambda)$.
  By taking the linear combination and summing up
  over all agents not in $A$, we get 
\begin{equation}
\label{boundnotAcoarse}\sum_{i \notin
  A}v^{c_i}_i(\B)\geq \frac{1}{\lambda+1}\sum_{i \notin A}v^{c_i}_i(o_i) + 
\frac{1}{\lambda} \sum_{i \notin A}\sum_{j \in [m]}\E[b_{ij}]
\end{equation}
For every $i \in A$, we consider the deviating bidding strategy $a_i(\B_{-i})$ 
that is described in Lemma~\ref{lem:deviationcoarse}, then
\begin{eqnarray*}
v^{c_i}_i(\B) &=& v_i(x_i(\B)) = u_i(x_i(\B)) + \sum_{j \in [m]}\E[b_{ij}] \ge u_i(a_i(\B_{-i}), \B_{-i}) + \frac{1}{\lambda}\sum_{j \in [m]}\E[b_{ij}]\\
&\geq& \frac{1}{\lambda+1} v_i^{c_i}(o_i)-\frac{1}{\lambda}\sum_{j\in[m]}\sum_{k\in[n]}o_{ij}\E[b_{kj}]+\frac{1}{\lambda} \sum_{j \in [m]}\E[b_{ij}]
\end{eqnarray*}
By summing up over all $i \in A$ and by combining with
inequality~\eqref{boundnotAcoarse} we get
\begin{eqnarray*}
&&\sum_{i\in[n]}\min\{v_i(x_i(\B)),c_i\}\\
 &\geq& \frac{1}{\lambda+1} \sum_{i\in[n]}v^{c_i}_i(o_i)+\frac{1}{\lambda} \sum_{i\in[n]}\sum_{j \in [m]}\E[b_{ij}]- \frac{1}{\lambda} \sum_{i \in A}\sum_{j \in [m]} \sum_{k\in[n]}o_{ij}\E[b_{kj}]\\
&\geq & \frac{1}{\lambda+1} \sum_{i\in[n]}v^{c_i}_i(o_i)  
\qquad\qquad \mbox{$\left(\textrm{ since }\sum_{i\in A} o_{ij} \leq 1\right)$}
\end{eqnarray*}
Therefore, the PoA with respect to the effective welfare is at most $\phi + 1$. 
(recall that for Inequality~\eqref{boundnotAcoarse} we set $\lambda=\phi$)
\end{proof}

By applying Jensen's inequality for concave functions, our upper bound
also holds for the Bayesian case with single-resource and concave functions.
\begin{theorem}\label{thm:budgetConcave}
  The Bayesian PoA of single-resource proportional allocation games is at most
  $\phi + 1 \approx 2.618$, when agents have budget constraints and concave
  valuations.
\end{theorem}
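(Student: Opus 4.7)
The plan is to combine the coarse correlated analysis of the preceding theorem with the Bayesian lifting technique used to pass from Theorem~\ref{thm:pro_mixed} to Theorem~\ref{thm:pro_bayes}. For a single resource ($m=1$) every concave $v_i$ is subadditive, and the truncation $v_i^{c_i}=\min\{v_i,c_i\}$ is again concave (the pointwise minimum of two concave univariate functions) and therefore subadditive, so Lemma~\ref{lem:deviationcoarse} continues to apply with $v_i^{c_i}$ in place of $v_i$.

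Fix a Bayesian Nash equilibrium $\B$, and let $\C_{-i}$ denote the unconditional distribution of opponents' bids (integrating over $\v_{-i}\sim\D_{-i}$). For each realized $v_i$ and an independent phantom draw $\w_{-i}\sim\D_{-i}$, I would invoke Lemma~\ref{lem:deviationcoarse} with profile $(v_i,\w_{-i})$, opponent distribution $\C_{-i}$, and $\lambda=\phi$, to construct a deviation $a_i(v_i;\w_{-i},\C_{-i})$; since it depends only on $v_i$ and on independent randomness it is a valid Bayesian deviation, and the lemma guarantees $\sum_j a_{ij}\le c_i$ throughout its support. Applying the Bayesian Nash inequality $\E_{\v_{-i}}[u_i^{v_i}(\B_i(v_i),\B_{-i}(\v_{-i}))]\ge \E_{\w_{-i}}[u_i^{v_i}(a_i,\C_{-i})]$, plugging in the deviation bound, taking expectation over $v_i$, summing over $i$, renaming $\w_{-i}$ as $\v_{-i}$, and using $\sum_i o_{ij}^{\v}\le 1$, I would obtain
\[
\sum_i\E_{\v}[u_i(\B(\v))]\;\ge\;\tfrac{1}{\phi+1}\sum_i\E_{\v}[v_i^{c_i}(o_i^{\v})]-\tfrac{1}{\phi}\sum_{k,j}\E[b_{kj}].
\]

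Next, mirroring the coarse correlated proof of the previous theorem, I would split the agents into a set $A$ of ``under-capped'' agents and its complement according to whether the Bayesian expected valuation is below $c_i$. For $i\notin A$, the crude bounds $c_i\ge \E_{\v}[v_i^{c_i}(o_i^{\v})]$ and $c_i\ge\sum_j\E[b_{ij}]$ combined through the golden-ratio identity $\tfrac{1}{\phi+1}+\tfrac{1}{\phi}=1$ give the required inequality. For $i\in A$ the per-agent deviation bound applies directly. Summing over all agents and using $\sum_{i\in A}o_{ij}^{\v}\le 1$ causes the payment and cross-payment terms to cancel, leaving $\E_{\v}[\EW(\B(\v))]\ge\tfrac{1}{\phi+1}\E_{\v}[\EW(\o^{\v})]$, i.e., the claimed PoA bound of $\phi+1$.

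The role of Jensen's inequality, and the reason for restricting to a single resource and concave valuations, is to reconcile the Bayesian expectation over $\v$ with the cap $\min\{\cdot,c_i\}$ inside the definition of expected effective welfare: on a one-dimensional domain, concavity of $v_i$ gives $\E_{\b}[v_i(x_i(\b))]\le v_i(\E_{\b}[x_i(\b)])$ and $v_i^{c_i}$ behaves well under averaging, which is what allows the set $A$ to be defined consistently across all sources of randomness and the comparison between $\E_{\v}[\min\{\E_{\b}[v_i(x_i)],c_i\}]$ and $\sum_{i}\E_{\v,\b}[v_i(x_i)]$ to be performed in the direction needed to close the argument. This is exactly where I expect the main obstacle to lie: for general subadditive valuations, or for multiple resources, the cap cannot be commuted with the Bayesian expectation in the required direction, which is precisely why the extension of the preceding coarse correlated result to the Bayesian regime requires the additional single-resource and concavity assumptions.
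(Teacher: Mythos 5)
Your high-level plan is the right one, but the specific way you lift the argument to the Bayesian setting lands on the wrong benchmark, and this is exactly the point where the proof actually needs the single-resource and concavity assumptions. Following the phantom-sampling route of Theorem~\ref{thm:pro_bayes}, you apply Lemma~\ref{lem:deviationcoarse} to the profile-specific optimal allocation $o_i^{(v_i,\w_{-i})}$ and then average, which gives you a guarantee in terms of $\sum_i\E_{\v}\bigl[v_i^{c_i}(o_i^{\v})\bigr]=\sum_i\E_{\v}\bigl[\min\{v_i(o_i^{\v}),c_i\}\bigr]$. But the expected effective welfare of the (randomized, because type-dependent) optimal allocation is defined in the preliminaries with the expectation \emph{inside} the cap, i.e., the benchmark is $\sum_i\E_{t_i}\bigl[\min\{\E_{\t_{-i}}[v_i(o_i^{\t})],c_i\}\bigr]$, and since $\min\{\cdot,c\}$ is concave, Jensen gives $\E_{\t_{-i}}[\min\{v_i(o_i^{\t}),c_i\}]\le\min\{\E_{\t_{-i}}[v_i(o_i^{\t})],c_i\}$ --- the wrong direction. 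So your displayed inequality bounds the equilibrium against a strictly weaker benchmark and does not yield the theorem; there is no way to patch this by post-hoc averaging, because the needed inequality $\E[\min\{v_i(o_i),c_i\}]\ge\min\{\E[v_i(o_i)],c_i\}$ is false.

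The paper's fix is to change \emph{which} allocation is fed into Lemma~\ref{lem:deviationcoarse}: it uses the deterministic averaged allocation $o_i^{t_i}=\E_{\t_{-i}}[o_i^{\t}]$, which is a legitimate single allocation precisely because $m=1$ (a scalar in $[0,1]$, with feasibility $\sum_i\E_{\t}[o_i^{\t}]\le1$ preserved by linearity). Concavity then enters through Jensen applied to $v_i$ itself: $v_i^{c_i}(o_i^{t_i})=\min\{v_i(\E_{\t_{-i}}[o_i^{\t}]),c_i\}\ge\min\{\E_{\t_{-i}}[v_i(o_i^{\t})],c_i\}$, which is exactly the per-type benchmark term. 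Your final paragraph correctly intuits that Jensen and the one-dimensional domain are what reconcile the cap with the Bayesian expectation, but your derivation never implements this step --- it invokes Jensen on the equilibrium side rather than on the optimum side, where it is actually needed. Two smaller points: in this theorem the budget is part of the random type $t_i=(v_i,c_i)$, so the split into ``under-capped'' agents must be done per type realization ($t_i\in A_i$) rather than per agent; and the cross-payment cancellation then uses $\sum_i\E_{t_i\in A_i}[o_i^{t_i}]\le\E_{\t}[\sum_i o_i^{\t}]\le1$, which does go through.
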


\begin{proof}
  Suppose $\B$ is a Bayesian Nash equilibrium. Recall that in the
  Bayesian setting, agent $i$'s type $t_i=(v_i,c_i)$ are drawn from
  some know distribution independently. We use the notation
  $\C=(C_1,C_2,\ldots,C_n)$ to denote the bidding distribution in $\B$
  which includes the randomness of bidding strategy $\b$ and agents'
  types $\t$, that is $b_i(t_i)\sim C_i$. Then the utility of agent
  $i$ with type $t_i$ is $u_i(B_i(t_i), \C_{-i})$. Notice
  that $\C_{-i}$ does not depend on any particular $\t_{-i}$.

  Recall that $v^{c}(x) = \min \{v(x),c\}$. It is easy to check
  $v^c$ is concave if $v$ is concave. For any agents
  types $\t=(\v,\c)$, let $\o^{\t} = (o_1^{\t},...,o_n^{\t})$ be the
  allocation vector that maximizes the {\em effective welfare}.  We
  define $o_i^{t_i}$ to be the expected allocation over
  $\t_{-i}\sim\D_{-i}$ to agent $i$, in the optimum solution with
  respect to effective welfare, when her type is $t_i$. Formally,
  $o_i^{t_i} = \E_{\t_{-i} \sim \D_{-i}}[o_i^{(t_i,\t_{-i})}]$.

  For all agent $i$, let $A_i$ be the set of $t_i$ such that
  $v_i(x_i(B_i(t_i), \C_{-i})) \leq c_i$. For simplicity, we use to
  $v^{c_i}_i(B_i(t_i), \C_{-i})$ to denote
  $\min\{\E_{\t_{-i},\b\sim\B(\t)}[v_i(x_i(\b))],c_i\}$. For every
  $t_i \notin A_i$,
  $v^{c_i}_i(B_i(t_i), \C_{-i})=c_i\ge \min\{\E_{\t_{-i}}[v_i(o_i^{\t})],c_i\}$ and
  $v^{c_i}_i(B_i(t_i), \C_{-i})=c_i\ge \E[b_i(t_i)]$. The latter
  inequality comes from that agents do not bid above their budget. Let
  $\lambda=\phi$. So $1-1/\lambda=1/(1+\lambda)$. By taking the
  linear combination, taking the expectation over all $t_i \notin A_i$
  and summing up over all agents not in $A$, we get 
  \begin{equation}
    \label{boundnotAConcave}
    \sum_i\E_{t_i \notin A_i}[v^{c_i}_i(B_i(t_i), \C_{-i})]
    \geq\sum_i\E_{t_i \notin A_i}\left[\frac{1}{\lambda + 1} \min\left\{\E_{\t_{-i}}[v_i(o_i^{\t})],c_i\right\}
           + \frac{1}{\lambda}b_i(t_i)\right]
  \end{equation}

  For every $t_i \in A_i$, by Lemma~\ref{lem:deviationcoarse}, there
  exists a randomized bid $a_i(t_i,\B_{-i})$ for agent $i$, such that,
  for any $\lambda \geq 1$:
  $u_i(a_i(t_i,\B_{-i}), \B_{-i})\ge
  \frac{1}{\lambda+1}v^{c_i}_i(o_i^{t_i})- \frac{1}{\lambda}
  o_{i}^{t_i} \sum_{k\neq i} \E[b_k]$. By the definition of equilibria,
  \begin{eqnarray*}
    &&v^{c_i}_i(B_i(t_i),\C_{-i})=v_i(B_i(t_i),\C_{-i})=u_i(B_i(t_i), \C_{-i})+\E[b_i(t_i)]\\
    &\geq& u_i(a_i(t_i,\C_{-i}), \C_{-i}) + \frac{1}{\lambda}B_{i}(t_i)\geq\frac{1}{\lambda+1}v^{c_i}_i(o_i^{t_i})-
           \frac{1}{\lambda} o_{i}^{t_i}\sum_{k\in[n]}\E[b_k] + \frac{1}{\lambda} B_{i}(t_i)\\
    &\geq& \frac{1}{\lambda+1}\min\left\{\E_{\t_{-i}}[v_i(o_i^{\t})],c_i\right\}-
           \frac{1}{\lambda} o_{i}^{t_i}\sum_{k\in[n]}\E[b_k]+\frac{1}{\lambda}\E[b_i(t_i)]
  \end{eqnarray*}

  The last inequality holds due to Jensen's inequality for concave
  functions. 
  By taking the expectation over all $t_i \in A_i$, summing over all
  agents and combining with inequality \eqref{boundnotAConcave}:
  \begin{eqnarray*}
    &&\sum_i\E_{t_i}[v^{c_i}_i(B_i(t_i), \C_{-i})]\\
    &\geq& \frac{1}{\lambda + 1}\sum_i\E_{t_i}\left[\min\left\{\E_{\t_{-i}}\left[v_i\left(o_i^{\t}\right)\right],c_i\right\}\right]
    + \frac{1}{\lambda}\sum_i\E[b_i] - \frac{1}{\lambda}\sum_i\E_{t_i \in A_i}\left[ o_{i}^{t_i}\right] \sum_{k\in[n]} \E[b_k]  \\
    & \geq& \frac{1}{\lambda + 1}\sum_i\E_{t_i}\left[\min\left\{\E_{\t_{-i}}\left[v_i\left(o_i^{\t}\right)\right],c_i\right\}\right]
            + \frac{1}{\lambda}\sum_i\E[b_i]-\frac{1}{\lambda}\sum_{k\in[n]} \E[b_k]\\
    &=&\frac{1}{\lambda + 1}\sum_i\E_{t_i}\left[\min\left\{\E_{\t_{-i}}\left[v_i\left(o_i^{\t}\right)\right],c_i\right\}\right]
  \end{eqnarray*}
  The first inequality is due to that
  $\sum_i \E_{t_i}\left[ o_{i}^{t_i}\right] = \sum_i \E_{\t}\left[
    o_{i}^{\t}\right] = \E_{\t}\left[ \sum_i o_{i}^{\t}\right] \leq
  1$, since for every $\t$, $\sum_i o_{i}^{\t} \leq 1$.
  Therefore, the PoA is at most $\phi + 1$.
\end{proof}

\begin{remark}
  Syrgkanis and Tardos~\cite{ST13}, compared the social welfare in the
  equilibrium with the effective welfare in the optimum
  allocation. Caragiannis and Voudouris~\cite{CV14} also give an upper
  bound of $2$ for this ratio in the single resource case. 
	We can obtain the same upper bound by replacing $\lambda$ with $1$ 
	in Lemma~\ref{lem:deviationcoarse} and following the ideas of 
	Theorems~\ref{thm:pro_mixed} and \ref{thm:pro_bayes}.
\end{remark}


\section{Polyhedral Environment}
In this section, we study the efficiency of the proportional allocation mechanism in the polyhedral
environment, that was previously studied by Nguyen and
Tardos~\cite{nguyen_approximately_2007}. We show a {\em tight} price
of anarchy bound of 2 for agents with subadditive valuations. 
Recall that, in this setting, the allocation to each agent $i$ is now
represented by a {\em single parameter} $x_i$, and not by a vector
$(x_{i1},\ldots, x_{im})$. In addition, any feasible allocation vector
$\mathbf{x}=(x_1,\dots,x_n)$ should satisfy a polyhedral constraint
$A\cdot \mathbf{x}\le \mathbf{1}$, where $A$ is a non-negative
$m\times n$ matrix and each row of $A$ corresponds to a different
resource, and $\mathbf{1}$ is a vector with all ones.  Each agent aims
to maximize her utility $u_i=v_i(x_i)-q_i$, where $v_i$ is a
subadditive 
function representing the agent's valuation. The proportional allocation mechanism determines the
following allocation and payments for each agent:
$$x_i(\b)=\min_{j:a_{ij}>0}\left\{\frac{b_{ij}}{a_{ij}\sum_{k\in
      [n]}b_{kj}}\right\};\quad\quad q_i(\b)=\sum_{j\in [m]}b_{ij},$$
where $a_{ij}$ is the $(i,j)$-th entry of matrix $A$. It is easy to verify
that the above allocation satisfies the polyhedral
constraints.  

\begin{theorem}
  \label{thm:poly_upper}
  If agents have subadditive valuations, the pure PoA of the proportional allocation mechanism in the
  polyhedral environment is exactly $2$.
\end{theorem}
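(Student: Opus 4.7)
The plan is to prove both directions separately: the upper bound $\mathrm{PoA}\le 2$ via a bid-deviation argument adapted from Lemma~\ref{lem:proportion}, and the matching lower bound via an explicit pure-equilibrium polyhedral instance.

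For the upper bound, fix a pure Nash equilibrium $\b$ and set $p_{ij}=\sum_{k\ne i} b_{kj}$. For each agent $i$, I would consider the deviating pure bid $b'_{ij}=a_{ij}\,o_i\,p_{ij}$ on resources $j$ with $a_{ij}>0$ (and $b'_{ij}=0$ otherwise), handling the degenerate case $p_{ij}=0$ by an arbitrarily small perturbation so that the expressions below are well-defined. Plugging this into the polyhedral allocation rule yields
\begin{equation*}
x_i(b'_i,\b_{-i}) \;=\; \min_{j:a_{ij}>0}\frac{a_{ij}o_i p_{ij}}{a_{ij}\,(a_{ij}o_i p_{ij}+p_{ij})} \;=\; \min_{j:a_{ij}>0}\frac{o_i}{a_{ij}o_i+1} \;\ge\; \frac{o_i}{2},
\end{equation*}
where the last inequality follows from $a_{ij}o_i\le 1$, a direct consequence of feasibility $\sum_i a_{ij}o_i\le 1$ of $\o$. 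By subadditivity and monotonicity of the single-variable $v_i$, $v_i(x_i(b'_i,\b_{-i}))\ge v_i(o_i/2)\ge v_i(o_i)/2$, and the deviation payment is $\sum_j b'_{ij}=o_i\sum_j a_{ij}p_{ij}$. Summing the Nash inequalities $u_i(\b)\ge u_i(b'_i,\b_{-i})$ over $i$ and rearranging gives
\begin{equation*}
\SW(\b)-\sum_i q_i(\b) \;\ge\; \frac{1}{2}\SW(\o)-\sum_i o_i\sum_j a_{ij}p_{ij}.
\end{equation*}
Swapping the double sum and applying $\sum_i a_{ij}o_i\le 1$ bounds $\sum_i o_i\sum_j a_{ij}p_{ij}\le \sum_j\sum_k b_{kj}=\sum_i q_i(\b)$, whence $\SW(\b)\ge \SW(\o)/2$.

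For the matching lower bound, I would construct a small polyhedral instance with a pure Nash equilibrium whose welfare approaches $\SW(\o)/2$. Since Nguyen and Tardos~\cite{nguyen_approximately_2007} established $\mathrm{PoA}\le 4/3$ for concave valuations, any tight example must use a non-concave subadditive valuation. A natural attempt is a two-agent instance whose polyhedral matrix $A$ binds both agents on a single shared constraint, combined with valuations that are essentially linear on a short initial segment and then saturated, plus a controlled jump at zero tuned so as to preserve subadditivity. The parameters should be chosen so that (i) the symmetric profile gives each agent exactly half of her optimum allocation, making the subadditivity step $v_i(o_i/2)=v_i(o_i)/2$ tight, (ii) the binding row satisfies $a_{ij}o_i=1$ in the support, making the feasibility step tight as well, and (iii) no agent has a profitable unilateral deviation, which is checked by direct case analysis on the min-rule best response.

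The principal obstacle is the lower bound construction. Monotone subadditive single-variable valuations are restrictive (e.g., $v(x)/x$ must be non-increasing), non-concave subadditive examples are delicate, and pure-equilibrium existence is not a priori guaranteed once concavity is relaxed. I therefore expect to need to co-design the polyhedral matrix $A$ and the valuation profile so that the best-response conditions hold exactly at the target profile while the tight inequalities of the upper bound analysis are simultaneously saturated; verifying the Nash conditions by explicit deviation analysis will likely be the most technical part of the argument.
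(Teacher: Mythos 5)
Your upper bound argument is correct and is essentially identical to the paper's: the same deviation $b'_{ij}=o_i a_{ij} p_{ij}$, the same use of $o_i a_{ij}\le 1$ to get allocation at least $o_i/2$, the same subadditivity step $v_i(o_i/2)\ge v_i(o_i)/2$, and the same summation using $\sum_i o_i a_{ij}\le 1$ and $p_{ij}\le\sum_k b_{kj}$. That half is fine.

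The genuine gap is the lower bound: you describe desiderata for a construction but never exhibit one, and you explicitly defer the hardest part (verifying the Nash conditions). As it stands the proposal proves only $\mathrm{PoA}\le 2$, not ``exactly $2$.'' Moreover, one of your guiding heuristics is false and would steer you away from the construction that works: it is \emph{not} true that a monotone subadditive $v$ with $v(0)=0$ must have $v(x)/x$ non-increasing. For instance $v(x)=1+\epsilon x$ on $0<x<1$ with $v(1)=2$ and $v(0)=0$ is subadditive (any two positive arguments already give $v(x)+v(y)\ge 2\ge v(x+y)$), yet $v(x)/x$ jumps up at $x=1$. This is exactly the paper's example: two agents, a single resource with constraint $x_1+x_2\le 1$ (so $a_{11}=a_{21}=1$ and the allocation is plain proportional sharing), $v_1$ as above and $v_2(x)=\epsilon x$. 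One checks by a one-line derivative computation that $b_1=b_2=\epsilon/4$ is a pure Nash equilibrium: each agent's utility as a function of her own bid $x$ against an opponent bid of $\epsilon/4$ is $\mathrm{const}+\epsilon\frac{x}{x+\epsilon/4}-x$, maximized at $x=\epsilon/4$, and agent $1$ can never reach the jump at allocation $1$ with any finite bid. The equilibrium welfare is $1+\epsilon$ while the optimum (all to agent $1$) is $2$, so $\mathrm{PoA}\ge 2/(1+\epsilon)\to 2$. Note also that the tightness does not come from saturating $a_{ij}o_i=1$ or from making $v_i(o_i/2)=v_i(o_i)/2$ hold for all agents, as your item-by-item tightness plan suggests; it comes from a discontinuity of $v_1$ at the full allocation that no unilateral deviation can reach.
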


\begin{proof}
  We first show that the PoA is at most $2$. Let
  $\mathbf{o}=\{o_1,\dots,o_n\}$ be the optimal allocation, $\b$ be a
  pure Nash Equilibrium, and let $p_{ij}=\sum_{k\neq i}b_{ij}$. For
  each agent $i$, consider the deviating bid $b'_i$ such that
  $b'_{ij}=o_ia_{ij}p_{ij}$ for all resources $j$. Since $\b$ is a Nash
  Equilibrium,
  \begin{eqnarray*}
    u_i(\b) &\ge& u_i(b'_i, b_{-i})=v_i\left(\min_{j:a_{ij}>0}
              \left\{\frac{o_ia_{ij}p_{ij}}{a_{ij}\left(p_{ij}+o_ia_{ij}p_{ij}\right)}\right\}\right) 
              -\sum_{j\in[m]}o_ia_{ij}p_{ij}\\
		&\ge& v_i\left(\frac{o_i}{2}\right)
		-\sum_{j\in[m]}o_ia_{ij}p_{ij}
    \ge\frac 12 v_i(o_i)-\sum_{j\in[m]}o_ia_{ij}p_{ij}
  \end{eqnarray*}
  The second inequality is true since
  $A\cdot \mathbf{x}\le \mathbf{1}$, for every allocation
  $\mathbf{x}$, and therefore $o_ia_{ij} < 1$. The last inequality
  holds due to subadditivity of $v_i$. By summing up over all agents,
  we get 
	$$\sum_iu_i(\b)\ge \frac 12 \sum_iv_i(o_i)
	-\sum_{j\in[m]}\sum_{i\in[n]}o_ia_{ij}p_{ij}
  \ge\frac 12 \sum_iv_i(o_i)
	-\sum_{j\in[m]}\sum_{k\in[n]}b_{kj}.$$
	The last inequality holds due to the fact that $p_{ij}\le \sum_{k\in[n]}b_{kj}$
  and $\sum_{i\in[n]}o_ia_{ij}\le 1$. The fact that PoA $\le 2$ follows by rearranging the
  terms.

  For the lower bound, consider a game with only two agents and a
  single resource where the polyhedral constraint is given by
  $x_1+x_2\le 1$. The valuation of the first agent is
  $v_1(x)=1+\epsilon\cdot x$, for some $\epsilon<1$ if $x<1$ and
  $v_1(x)=2$ if $x=1$. The valuation of the second agent is
  $\epsilon\cdot x$. One can verify that these two functions are
  subadditive and the optimal social welfare is $2$. Consider the
  bidding strategies $b_1=b_2=\frac {\epsilon}4 $. The utility of
  agent $1$, when she bids $x$ and agent $2$ bids $\frac {\epsilon}4$,
  is given by $1+\epsilon\cdot\frac{x}{x+\epsilon/4}-x$ which is
  maximized for $x=\frac {\epsilon}4$. The utility of agent $2$, when
  she bids $x$ and agent $1$ bids $\frac {\epsilon}4$, is
  $\epsilon\cdot\frac{x}{x+\epsilon/4}-x$ which is also maximized when
  $x=\frac {\epsilon}4$. So $(b_1,b_2)$ is a pure Nash Equilibrium
  with social welfare $1+\epsilon$. Therefore, the PoA converges to
  $2$ when $\epsilon$ goes to $0$.
\end{proof}


\bibliographystyle{plain}
\bibliography{poa}

\end{document}